\let\theoremstyle\relax  
\tikzset{
	nomorepostactions/.code={\let\tikz@postactions=\pgfutil@empty},
	mymark/.style 2 args={decoration={markings,
			mark= between positions 0 and 1 step (1/6)*\pgfdecoratedpathlength with{%
				\tikzset{#2,every mark}\tikz@options
				\pgfuseplotmark{#1}%
			},  
		},
		postaction={decorate},
		/pgfplots/legend image post style={
			mark=#1,mark options={#2},every path/.append style={nomorepostactions}
		},
	},
}
\pgfplotsset{compat=newest} 
\pgfplotsset{plot coordinates/math parser=false} 
\pgfplotsset{every tick label/.append style={font=\small}}
\pgfplotsset{every label/.append style={font=\small}}
\pgfplotsset{every x tick scale label/.style={
		at={(1,0)},xshift=1pt,anchor=south west,inner sep=0pt}}
\newlength\figureheight 
\newlength\figurewidth
\newcommand{\norm}[1]{\left\lVert#1\right\rVert}
\def\bl{\bar{\lambda}}
\def\Hnorm{\upharpoonleft \hspace{-0.13cm} H \hspace{-0.1cm} \upharpoonright}
\newcommand{\HnormB}[1]{\left.\upharpoonleft \hspace{-0.15cm} #1 \hspace{-0.15cm} \right.\upharpoonright} 
\newtheorem{theo}{Theorem}
\newtheorem{lemma}{Lemma}
\newtheorem{propo}{Proposition}
\theoremstyle{definition}
\newtheorem{defi}{Definition}
\newtheorem{ass}{Assumption}
\newtheorem{remark}{Remark}
\begin{document}

\begin{frontmatter}
\title{Stability and performance in transient average constrained economic MPC without terminal constraints\thanksref{footnoteinfo}} 

\thanks[footnoteinfo]{The authors thank the German Research Foundation (DFG) for support of this work under Grants GRK 2198/1 - 277536708 and AL 316/12-2, and MU 3929/1-2 - 279734922.}

\author[First]{Mario Rosenfelder} 
\author[Second]{Johannes K\"ohler}
\author[Second]{Frank Allg\"ower}

\address[First]{M.Sc. student University of Stuttgart, 70550 Stuttgart, Germany (email:mario-rosenfelder@web.de).}
\address[Second]{Institute for Systems Theory and Automatic Control, University of Stuttgart, 70550 Stuttgart, Germany (email:$\{$johannes.koehler, frank.allgower\}@ist.uni-stuttgart.de).}
\begin{abstract}
	In this paper, we investigate system theoretic properties of transient average constrained economic model predictive control (MPC) without terminal constraints. We show that the optimal open-loop solution passes by the optimal steady-state for consecutive time instants. Using this turnpike property and suitable controllability conditions, we provide closed-loop performance bounds. Furthermore, stability is proved by combining the rotated value function with an input-to-state (ISS) Lyapunov function of an extended state related to the transient average constraints. The results are illustrated with a numerical example.
\end{abstract}
\begin{keyword}

Nonlinear model predictive control, Economic MPC, Turnpike property
\end{keyword}
\end{frontmatter} 
{\let\thefootnote\relax\footnote{{\copyright 2020 the
authors. This work has been accepted to IFAC for publication under a Creative Commons Licence CC-BY-NC-ND}}}

\section{Introduction}
\subsubsection*{Motivation:}
Model predictive control (MPC)~\citep{rawlings2017model} is a popular control method that computes the control input by repeatedly solving an optimal control problem. The prime advantages of MPC are that it can deal with complex nonlinear dynamics, general objective functions,  multiple-input-multiple-output (MIMO) systems, as well as arbitrary input and state constraints. 

The main objective of MPC does not necessarily need to be stability but can rather be optimal performance with respect to an economic criterion, which results in a cost function that does not have to be positive definite with respect to any setpoint. This variant of MPC is called economic MPC (EMPC) \citep{Angeli12}, \citep{Faulwasser18}. Usually, an optimal steady-state is determined and then, is used as a terminal condition for the finite horizon problem. These terminal conditions are often omitted in practical applications since they can be complicated to design (additional offline computation) and they can limit the operating region of the controller. Moreover, the absence of additional constraints makes the finite horizon optimal control problem in each step easier to solve. 

Additionally to point-wise in time constraints, it stands to reason to consider constraints on average values. Constraints on states and inputs averaged over some finite time period can be of interest in several applications. For example, overheating of electric motors can be avoided by limiting values over a period of time. Another example are chemical processes where the amount of inflow must not exceed a certain value over a finite time period or, limiting the frequency deviation in power grids. Hence, the question arises whether we can consider such transient average constraints without imposing terminal constraints in order to benefit from the advantages of both properties. We derive theoretical guarantees in terms of performance guarantees as well as stability for the transient average constrained EMPC scheme without terminal constraints.
\subsubsection*{Related work:}
Economic MPC has been investigated with a terminal equality constraint~\citep{Diehl2011}, as well as for a terminal cost and a terminal region~\citep{Amrit11}. Performance estimates can be found in~\citep{Angeli12} and~\citep{GruenePanin2015}.
EMPC without terminal constraints is introduced by~\cite{Gruene13} where the results are further developed by~\cite{Gruene14} in order to show practical asymptotic stability. In~\cite{Mueller13,Mueller14b}, convergence of averagely constrained EMPC with terminal ingredients is considered. \cite{Koehler17} present a transient, nonaveraged performance estimate for the corresponding closed loop with asymptotic average constraints. The stricter form of transient average constraints is introduced by~\cite{Mueller14}. There, closed-loop average performance bounds and convergence results are proved for EMPC with transient average constraints by imposing a terminal region and a terminal cost.
\subsubsection*{Contribution:} 
So far, results for transient average constraints in EMPC have been shown by imposing terminal conditions~\citep{Mueller14}. However, transient average constrained EMPC without terminal constraints has not been investigated. We bridge this gap by the following contributions.

We describe the EMPC scheme with transient average constraints using an extended state containing past auxiliary outputs.
As a first contribution, we extend existing turnpike arguments to conclude a turnpike property over multiple consecutive time steps, which implies a turnpike for this extended state. 
Then we provide transient performance guarantees and show value convergence of the closed-loop cost functional, similar to the derivation of \cite{Gruene13} and \cite{Gruene14}.
For the stability analysis, we show that contrary to most EMCP schemes, in the considered formulation the rotated value function is \textit{not} a suitable Lyapunov function.
Instead, we use a Lyapunov function consisting of the rotated value function and an input-to-state (ISS) Lyapunov function that describes the finite-memory property of added state variables.
With this novel Lyapunov function, we prove practical asymptotic stability of the closed loop.
We illustrate the results with the academic example from~\citep{Mueller13,Koehler17}. 

\subsubsection*{Outline:} Section~\ref{sec:Preliminiaries} formulates the control problem.  Section~\ref{sec:Turnpikes} provides turnpike properties and Section~\ref{sec:loc_cont} shows local continuity of the value function. Section~\ref{sec:peformance} contains performance guarantees. 
Section~\ref{sec:Stability} derives practical asymptotic stability of the closed loop. Section~\ref{sec:Example} illustrates the results with a numerical example.
Section~\ref{sec:Conclusion} concludes the paper. We note that the results in this paper are based on the thesis~\citep{thesis_mario}, which is available online and contains more detailed proofs.

\subsubsection*{Notation:}
The set of integers in $[a,b]\subseteq\mathbb{R}$ is denoted by $\mathbb{I}_{[a,b]}$, and the set of integers greater or equal to $a$ is denoted by $\mathbb{I}_{\geq a}$. 
We denote a ball with radius $r$ around a point $y$ by $\mathcal{B}_r (y):=\{ x\in\mathbb{R}^n \vert \norm{x-y}\leq r \}$.
For $c\in\mathbb{R}$, $\lceil c \rceil$ is defined as the smallest integer greater or equal to $c$.
With $\mathcal{K}$ we denote the set of continuous, strictly increasing functions $\alpha:\, [0,a)\to[0,\infty)$, which satisfy $\alpha(0)=0$. For $a=\infty$ and $\alpha(r)\to \infty$ as $r\to\infty$ we denote the class $\mathcal{K}_\infty$. The set of all decreasing functions $\delta:\, \mathbb{N}_0\to[0,\infty)$ with $\lim_{k\to\infty} \delta(k)=0$ is denoted by $\mathcal{L}_{\mathbb{N}}$. Class $\mathcal{KL}$ is the set of all continuous functions $\beta:\, [0,a)\times[0,\infty)\to[0,\infty)$ for which it holds $\beta(\cdot,s)\in\mathcal{K}$ and $\beta(r,\cdot)\in\mathcal{L}_{\mathbb{N}} $. Furthermore, we denote by $\mathcal{KLS}$ the class of functions $\beta\in\mathcal{KL}$ for which $\sum_{k=0}^\infty \beta(r,k)$ is finite for all $r\geq0$ and for which $\gamma_\beta (\cdot):=\sum_{k=0}^\infty \beta(\cdot,k)\in\mathcal{K}$.
\section{Preliminaries and problem setup}\label{sec:Preliminiaries}
\subsubsection*{Problem Setup:}
We consider discrete-time nonlinear systems 
\begin{equation}\label{eq:system}
x(k+1)=f(x(k),u(k)),
\end{equation}
with a continuous map $f:\, \mathbb{R}^n\times \mathbb{R}^m\to\mathbb{R}^n$, state $x\in\mathbb{R}^n$ and control values $u\in\mathbb{R}^m$. The system is subject to state and input constraints, which can possibly be coupled, i.\,e., $(x(k),u(k))\in\mathbb{Z}$ with a compact set $\mathbb{Z}\subseteq\mathbb{X}\times \mathbb{U}$ where $\mathbb{X}\subseteq\mathbb{R}^n$ and $\mathbb{U}\subseteq\mathbb{R}^m$. Additionally, the system is subject to average constraints expressed in terms of an auxiliary output $y=h(x,u)\in\mathbb{R}^p$. Considering transient average constraints, we require that for some given time period $T\in\mathbb{I}_{\geq 1}$ for all $k\geq 0$ it holds
\begin{equation}
\label{eq:trans_av_con}
\sum_{j=k}^{k+T-1}\frac{h(x(j),u(j))}{T}\in\mathbb{Y}.
\end{equation}
In the following, we consider w.\,l.\,o.\,g. $\mathbb{Y}:=\mathbb{R}^p_{\leq 0}$. In case $T=1$, we recover the special case of standard point-wise in time constraints.

For a given control sequence $u_N \in\mathbb{U}^N$ we denote the solution of~\eqref{eq:system} by $x_{u_N}(k,x)$ where $x\in\mathbb{X}$ is the initial value.
Furthermore, system~\eqref{eq:system} has a continuous stage cost $\ell:\, \mathbb{Z}\to\mathbb{R}$ which is assumed to be bounded from below. The standing assumptions are summarized as follows:
\begin{ass}\label{ass:comp_cont}
	The constraint set $\mathbb{Z}$ is compact and the maps $f:\mathbb{Z}\to\mathbb{X}$ and $\ell:\mathbb{Z}\to\mathbb{R}$ are continuous, i.\,e., there exist $\alpha_f,\, \alpha_l\in\mathcal{K}_\infty$ such that it holds $\norm{f(x_1,u_1)-f(x_2,u_2)}\leq \alpha_f(\norm{(x_1-x_2,u_1-u_2)})$ and $\left| \ell(x_1,u_1)-\ell(x_2,u_2) \right|\leq \alpha_l(\norm{(x_1-x_2,u_1-u_2)})$ for all $(x_1,u_1),(x_2,u_2)\in\mathbb{Z}$. Furthermore, the map $h:\,\mathbb{Z}\to\mathbb{R}^p$ is Lipschitz continuous with constant $L_h>0$.
\end{ass} 
The control goal is to minimize the stage cost $\ell(x,u)$ over the prediction horizon $N$ for system~\eqref{eq:system} subject to point-wise in time constraints and transient average constraints~\eqref{eq:trans_av_con}. Given an initial state $x$, the open-loop costs of a control sequence $u(\cdot)\in\mathbb{U}^N$ are defined as
\begin{equation}\label{eq:def:cost_functional}
J_N (x,u):=\sum_{k=0}^{N-1} \ell(x_u(k,x),u(k)).
\end{equation} 
Since feasibility of input sequences for transient average constrained EMPC also depends on past auxiliary output values, we introduce the additional state
$H(k):= [h(x(k-T+1),u(k-T+1)), \ \dots, \ h(x(k-1),u(k-1))  ]\in\mathbb{R}^{p\times (T-1)}$
and write $H_j$ for the $j$-th column of $H$. Analogous to the point-wise feasible set $\mathbb{Z}$, we write  $\mathbb{H}:=\{H\in\mathbb{R}^{p\times (T-1)}|~\underline{h}\leq H_{j} \leq \overline{h},~j\in\mathbb{I}_{[1,T-1]}\}$ with $\overline{h}_i:=\sup_{(x,u)\in\mathbb{Z}}h_i(x,u)$, $\underline{h}_i:=\inf_{(x,u)\in\mathbb{Z}}h_i(x,z)$
for the set of all feasible $H$. Now, given a state $(x,H)\in\mathbb{X}\times\mathbb{H}$, the set of all admissible control sequences is denoted by $\mathbb{U}^N(x,H)$, which is given by the following constraints:
\begin{align*}
	(x_u(k,x),u(k))\in &\mathbb{Z}, & k&\in\mathbb{I}_{[0,N-1]}\\
	\sum_{i=j}^{T-1} H_i + \sum_{k=0}^{j-1} h(x_u(k,x),u(k))\leq &0, &  j&\in\mathbb{I}_{[1,T-1]}\\
	\sum_{k=i}^{i+T-1} h(x_u(k,x),u(k))\leq &0, &  i&\in\mathbb{I}_{[0,N-T]}.
\end{align*}
This yields the following MPC optimization problem 
\begin{equation}\label{eq:opt_problem}
	J^\ast_N(x,H):= \inf_{u\in\mathbb{U}^N(x,H)} J_N(x,u),
\end{equation} where $J^\ast_N(x,H)$ denotes the value function. We assume that the infimum is attained by an unique minimizer $u^\ast_{N,x,H}$. 
In closed-loop operation, the optimization problem~\eqref{eq:opt_problem} is solved in each time step $k$ and the first element of the optimal input is applied creating an implicit feedback law $\mu: \mathbb{X}\times\mathbb{H}\to\mathbb{U}$ satisfying $\mu_N(x,H):=u^\ast_{N,x,H}(0)$. The corresponding closed-loop system is given by $x(k+1)=f(x(k),\mu_N(x(k),H(k)))$ and $H(k+1)=\left[H_2(k), \ \dots, \ H_{T-1}(k), \ h(x(k), \mu_N(x(k),H(k)))  \right]$. We abbreviate for the closed loop $x_{\mu_N}(k,x,H)$, $H^{\mathrm{cl}}(k,x,H)$, $\mu_N(k,x,H)$ and $h_{\mu_N}(k,x,H)$. The cost of the closed loop over some time $K$ is given by
\begin{equation}\label{eq:def_CL_cost_functional}
	J^{\mathrm{cl}}_K(x,H):= \sum_{k=0}^{K-1} \ell(x_{\mu_N}(k,x,H),\mu_N (k,x,H)).
\end{equation}

\subsubsection*{Definitions:}The satisfaction of the transient average constraints implies that it holds with $k_{T,N}:=\left\lceil \frac{N}{T} \right\rceil T -N$
\begin{equation}\label{eq:bound_h_H}
\sum_{k=0}^{N-1} h(x_u(k,x),u(k))\leq - \sum_{i=1}^{k_{T,N}} H_{T-i},
\end{equation} for any $u\in\mathbb{U}^{N}(x,H)$. Since we know that the transient average constraints need to be satisfied for multiples of $T$, we can bound arbitrary time intervals by the $T-1$ previous values which are stored in $H$. In order to compare different storages of the transient average constraints, we need a norm-like measure, because usual norms are not reasonable since they do not take the signs of the entries into account. For the case of $H$ it is vital to consider the sign of the entries; just entries $H_{i,j}> 0$ should contribute to our measurement. To this end, we denote $\hat{H}_k=\sum_{i=1}^p \max\{ H_{i,k},0\}$ and define the norm-replacement $\Hnorm:=\max_{k\in\mathbb{I}_{[1,T-1]}}\{ \hat{H}(k)\}$. 
This has the property that $\Hnorm\geq 0$ holds and $\Hnorm=0$ implies that all previous $T-1$ time steps satisfy $h(x,u)\leq0$. 

We consider the case where the system is optimally operated at the optimal steady-state given by $
	\ell(x_s,u_s):= \min\{ \ell(x,u) ~\vert~ (x,u)\in\mathbb{Z},~ h(x,u)\leq 0,~ x=f(x,u) \}.$
\begin{defi}
	System~\eqref{eq:system} is \textit{optimally operated at steady-state} $(x_s,u_s)$, if for each initial condition $(x,H)\in\mathbb{X}\times\mathbb{H}$ and any input $u\in\mathbb{U}^{\infty}(x,H)$ it holds
	\begin{equation*}
		\liminf_{\tau\rightarrow\infty}\frac{1}{\tau}\sum_{k=0}^{\tau-1}\ell(x_u(k,x),u(k))\geq \ell(x_s,u_s).
	\end{equation*}
\end{defi} A sufficient condition for optimal operation at steady-state is dissipativity~\citep{Mueller13,Angeli12}. 
The following dissipativity assumption is taken from~\cite{Mueller14}. 
\begin{ass}\label{ass:dissip}
System~\eqref{eq:system} is \textit{strictly dissipative} on $\mathbb{Z}$ with supply rate $s(x,u):=\ell(x,u)-\ell(x_s,u_s)+\overline{\lambda}^\top h(x,u)$, i.e., 
there exists a a bounded storage function $\lambda: \,\mathbb{X}\to\mathbb{R}$, a multiplier $\bl\in\mathbb{R}^p_{\geq 0}$ and a function $\rho\in\mathcal{K}_\infty $ s.\,t. for all $(x,u)\in\mathbb{Z}$ it holds $\lambda(f(x,u))-\lambda(x)\leq s(x,u)-\rho(\|(x-x_s,u-u_s)\|)$.
Moreover, $\lambda$ is continuous, i.\,e. there exists a function $\alpha_\lambda\in\mathcal{K}_\infty$ with $\left| \lambda(x_1)-\lambda(x_2)  \right|\leq \alpha_\lambda(\norm{x_1-x_2})$ and w.\,l.\,o.\,g. $\lambda(x_s)=0$.
\end{ass}
We denote $\ell_s:=\ell(x_s,u_s)$, $h_s:= h(x_s,u_s)$ and $H^s:= \left[h_s,\ \dots, \ h_s \right]\in\mathbb{H}$ if the past values of the auxiliary outputs were at the steady-state. 
\section{Turnpike properties}\label{sec:Turnpikes}
In this section, we extend the turnpike properties from \citep{Gruene13} to EMPC subject to transient average constraints. Since not only the initial state $x$ but moreover, the past $T-1$ time steps are of interest, we provide a turnpike property for consecutive time instants in Theorem~\ref{thm:turnpike_T_consec}. In order to prove our assertions, we define the rotated stage cost
\begin{equation}\label{eq:rotated_cost}
\tilde{\ell}(x,u):=\ell(x,u)-\ell_s+\lambda(x)-\lambda(f(x,u))+\bl^\top h(x,u)
\end{equation} and obtain the corresponding rotated cost functional $\tilde{J}_{N}(x,u):= J_N (x,u)-N\ell_s + \lambda(x)-\lambda(x_u(N,x))+ \sum_{k=0}^{N-1} \bl^\top h(x_u(k,x),u(k))$. 
Note that Ass.~\ref{ass:comp_cont} and~\ref{ass:dissip} imply that there exist functions $\alpha_h,\,\alpha_u\in\mathcal{K}_\infty$ such that it holds for all $(x,u)\in\mathbb{Z}$: 
\begin{subequations}\label{eq:propositions_bounds}
	\begin{align}
			\vert \bl^\top h(x,u) \vert&\leq \alpha_h \left( \norm{x-x_s}+\norm{u-u_s}\right),\label{eq:propositions_bounds_h}\\
			0\leq\tilde{\ell}(x,u)&\leq \alpha_u \left( \norm{x-x_s}+\norm{u-u_s}\right),\label{eq:propositions_bounds_tilde_l}
	\end{align}
\end{subequations} which follows from continuity and $\bl^\top h_s=0$, compare~\citep[Lem.~1]{thesis_mario}.

Before introducing the turnpike property in Lem.~\ref{lem:turnpike}, we define bounds on the auxiliary output in combination with the multiplier $\bl$ which read $\underline{\vartheta}_h:=\inf_{(x,u)\in\mathbb{Z}}\bl^\top h(x,u)$ and $\bar{\vartheta}_h:=\sup_{(x,u)\in\mathbb{Z}}\bl^\top h(x,u)$. Furthermore, we denote the set of time instants for which the trajectory $x_u$ resulting from the control sequence $u\in\mathbb{U}^N(x,H)$ is in a neighborhood $\mathcal{B}_{\epsilon}(x_s,u_s)$ of the steady-state by
\begin{equation*}
	\mathcal{P}^\epsilon (u,x):= \{k\in\mathbb{I}_{[0,N-1]}|~(x_u(k,x),u(k))\in\mathcal{B}_{\epsilon}(x_s,u_s)\},
\end{equation*} and the number of time instants by $Q^\epsilon (u,x):=\# P^\epsilon (u,x)$. The subsequent lemma shows the so called turnpike property, which follows from strict dissipativity.
\begin{lemma}\label{lem:turnpike}
Let Ass.~\ref{ass:comp_cont} and~\ref{ass:dissip} hold. For each $(x,H)\in\mathbb{X}\times\mathbb{H}$, each $\epsilon,\delta>0$ and each $u\in\mathbb{U}^N(x,H)$ satisfying $J_N(x,H)\leq N \ell_s+\delta$, it holds $Q^\epsilon(u,x)\geq N-\frac{C'}{\rho(\epsilon)}$ with $C':=\delta+C-k_{T,N}\underline{\vartheta}_h$, $C:=2\sup_{x\in\mathbb{X}}|\lambda(x)|$.
\end{lemma}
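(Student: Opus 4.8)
The plan is to exploit strict dissipativity to control the number of time steps the trajectory can spend \emph{away} from the steady-state. The starting point is the telescoping identity for the rotated cost functional: summing the dissipation inequality from Ass.~\ref{ass:dissip} along the trajectory $x_u(\cdot,x)$ gives, after rearranging,
\begin{equation*}
J_N(x,u) - N\ell_s + \sum_{k=0}^{N-1}\bl^\top h(x_u(k,x),u(k)) \;\geq\; \lambda(x_u(N,x)) - \lambda(x) + \sum_{k=0}^{N-1}\rho\!\left(\norm{(x_u(k,x)-x_s,\,u(k)-u_s)}\right).
\end{equation*}
First I would use the two-sided bound $|\lambda|\leq C/2$ on $\mathbb{X}$ to absorb the boundary terms $\lambda(x_u(N,x))-\lambda(x)$ into $-C$ on the left, and the feasibility-derived inequality~\eqref{eq:bound_h_H} together with the definition $\underline{\vartheta}_h=\inf\bl^\top h$ to bound $-\sum_{k=0}^{N-1}\bl^\top h(x_u(k,x),u(k)) \leq \sum_{i=1}^{k_{T,N}}(-\bl^\top H_{T-i})\cdot(\text{sign adjustment})$; more carefully, since $u\in\mathbb{U}^N(x,H)$ the transient average constraints force $\sum_{k=0}^{N-1} h \leq -\sum_{i=1}^{k_{T,N}} H_{T-i}$, and componentwise multiplication by $\bl\geq 0$ yields $\sum_{k=0}^{N-1}\bl^\top h(x_u(k,x),u(k)) \leq -\sum_{i=1}^{k_{T,N}}\bl^\top H_{T-i} \leq -k_{T,N}\underline{\vartheta}_h$. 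Combining these with the hypothesis $J_N(x,u)\leq N\ell_s+\delta$ collapses the whole left-hand side to the constant $C' := \delta + C - k_{T,N}\underline{\vartheta}_h$.

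The second step is the counting argument. We now have $\sum_{k=0}^{N-1}\rho\!\left(\norm{(x_u(k,x)-x_s,\,u(k)-u_s)}\right) \leq C'$. For every time instant $k\notin\mathcal{P}^\epsilon(u,x)$ we have $\norm{(x_u(k,x)-x_s,u(k)-u_s)} > \epsilon$ (up to the precise definition of the ball norm, which I would match to the norm appearing in $\rho$), so monotonicity of $\rho\in\mathcal{K}_\infty$ gives $\rho(\cdot) > \rho(\epsilon)$ on each such index. Since $\rho\geq 0$ everywhere, dropping the indices in $\mathcal{P}^\epsilon(u,x)$ only decreases the sum, leaving $(N - Q^\epsilon(u,x))\,\rho(\epsilon) \leq C'$, i.e. $Q^\epsilon(u,x) \geq N - C'/\rho(\epsilon)$, which is the claim.

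The main obstacle, such as it is, is bookkeeping rather than conceptual: getting the constant $C'$ exactly right. Three things need care. (i) The boundary term $-\lambda(x)$ has the wrong sign to absorb into $-C$ naively, but $\lambda(x_u(N,x)) - \lambda(x) \geq -2\sup_{\mathbb{X}}|\lambda| = -C$, which is exactly what is needed on the correct side of the inequality. (ii) The statement is written with $J_N(x,H)$ in the hypothesis, which I read as a typo for $J_N(x,u)$ — the cost of the specific feasible sequence $u$ — and I would state it that way. (iii) One must verify that the $h$-term really is bounded below by $-k_{T,N}\underline{\vartheta}_h$ and not something involving $\bar\vartheta_h$; this is where the sign $\bl\geq 0$ and the direction of the average constraint $\mathbb{Y}=\mathbb{R}^p_{\leq 0}$ are both essential, since they make the feasibility slack $-\sum H_{T-i}$ point the right way. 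Once these signs are pinned down the proof is a three-line telescoping-plus-pigeonhole estimate, entirely parallel to the terminal-constraint-free turnpike results of \citep{Gruene13}, with the auxiliary-output term as the only genuinely new ingredient.
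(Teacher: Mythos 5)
Your proposal is correct and follows essentially the same route as the paper's proof: both pass to the rotated cost $\tilde J_N$, bound it above by $C'$ using $|\lambda|\leq C/2$ and the feasibility bound~\eqref{eq:bound_h_H} with $\bl\geq 0$ and $\underline{\vartheta}_h$, and then bound it below by $(N-Q^\epsilon(u,x))\rho(\epsilon)$ via strict dissipativity (the paper phrases this last counting step as a contradiction, you state it directly — an immaterial difference). Your reading of $J_N(x,H)$ as a typo for $J_N(x,u)$ is also the intended one.
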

\begin{proof}
	It follows from~\eqref{eq:bound_h_H}--\eqref{eq:rotated_cost} and the presumed conditions that $\tilde{J}_N(x,u)\leq \delta+C-\bl^\top \sum_{i=1}^{k_{T,N}}H_{T-i}\leq C'$. Now, assume $Q^\epsilon(u,x)<N-C'/\rho(\epsilon)$, i.\,e., there exists a set $\mathcal{N}\subseteq\mathbb{I}_{[0,N-1]}$ of $\#\mathcal{N}=N-Q^\epsilon(u,x)$ time instants s.\,t. $(x_u(k,x),u(k))\notin\mathcal{B}_{\epsilon}(x_s,u_s)$ holds for all $k\in\mathcal{N}$. Strict dissipativity implies $\tilde{J}_N(x,u)\geq (N-Q^\epsilon(u,x))\rho(\epsilon)>C'$ which is a contradiction and hence, yields the assertion.
\end{proof}
In  case $T=1$ we recover the results in~\cite[Thm. 5.3]{Gruene13}. Furthermore, in case  $h_i(x_s,u_s)<0$ for all $i\in\mathbb{I}_{[1,p]}$ we get $\bl=0$ and the following proofs are analogous to~\cite{Gruene13}. However, we want to consider the general case where the transient average constraints are active at the steady-state.

Another condition we need is an asymptotic controllability assumption w.r.t. the stage costs $\ell,\tilde{\ell}$ similar to~\cite[Ass.~5.5]{Gruene13}.
\begin{ass}\label{ass:asy_contr}
	There exist $\beta_1,\, \beta_2\in\mathcal{KLS}$ such that for each $(x,H)\in\mathbb{X}\times \mathbb{H}$ and each $N\in\mathbb{N}$ there exists  $u\in\mathbb{U}^{N}(x,H)$ such that it holds for all $k\in\mathbb{I}_{[0,N-1]}$:
{\begin{align*}
\ell(x_u(k,x),u(k))- \ell_s\leq&\beta_1 (\norm{x-x_s},k)+\beta_2(\HnormB{H-H^s},k)\\
\tilde{\ell}(x_u(k,x),u(k))\leq& \beta_1 (\norm{x-x_s},k)+\beta_2(\HnormB{H-H^s},k).
\end{align*}}
\end{ass}
 Note that Ass.~\ref{ass:asy_contr} yields that for optimal input sequences the condition $J_N^\ast (x,H)\leq N\ell_s+\delta$ always holds with $\delta:=\max_{x\in\mathbb{X}} \gamma_{\beta_1}(\norm{x-x_s})+\max_{H\in\mathbb{H}}\gamma_{\beta_2}(\HnormB{H-H^s})$ and we write $\hat{C}':= \delta+C-(T-1)\underline{\vartheta}_h$. 
Now, considering $q\in\mathbb{N}$ (different) trajectories at once, we introduce a set which contains all common time instants for which the trajectories are in a neighborhood $\mathcal{B}_\epsilon(x_s,u_s)$. Given $q$ trajectories $u_i\in\mathbb{U}^{N_i}(x_i,H^i)$, we write for the intersection $\mathcal{P}^{'\epsilon}_{[k_l,k_u]}\left( (u_1,x_1),\dots,(u_q,x_q) \right):= \cap_{i=1}^q\mathcal{P}^{\epsilon} (u_i,x_i) \cap \{ k_l,\dots,k_u \}$, where $0\leq k_l<k_u\leq  \max_i N_i=:\hat{N}$ are given bounds to focus on a specific interval. Using this and repeatedly considering optimal trajectories, we are able to show in the following theorem that there exists a sufficiently large prediction horizon such that the optimal trajectory has $T$ consecutive time instants in an arbitrarily small neighborhood $\mathcal{B}_\epsilon (x_s,u_s)$.

\begin{theo}\label{thm:turnpike_T_consec}
	Let Assumption~\ref{ass:comp_cont}-\ref{ass:asy_contr} hold. For any trajectories $u_i\in\mathbb{U}^{N_i}(x_i,H^i)$ with $i\in\mathbb{I}_{[1,q]}$, $(x_i,H^i)\in\mathbb{X}\times\mathbb{H}$ and $N_i\in\mathbb{N}$ satisfying $J_{N_i}(x_i,u_i)\leq N_i\ell_s+\delta$, as well as  for any $k_l,\,k_u\in\mathbb{N}$ s.\,t. $0\leq k_l<k_u\leq \hat{N}-1$ and any $m\in\mathbb{N}$ satisfying $k_u-k_l-m-q\Delta_N>0$ with $\Delta_N:=\max_{i,j\in\mathbb{I}_{[1,q]}} (N_i -N_j)$, the neighborhood
	\begin{equation}\label{eq:intersection_nbhd}
	\epsilon=\rho^{-1} \left( \frac{q\hat{C}'}{k_u-k_l-m-q\Delta_N} \right)
	\end{equation} yields $\# \mathcal{P}^{'\epsilon}_{[k_l,k_u]}\left( (u_1,x_1),\dots,(u_q,x_q) \right)\geq m$.  
	Furthermore, there exists $\sigma_T\in\mathcal{L}_{\mathbb{N}}$, such that for any $(x,H)\in\mathbb{X}\times\mathbb{H}$ and any $k'_l,\,k'_u\in\mathbb{N}$ satisfying $k'_l\in\mathbb{I}_{[0,k'_u-T^2-1]}$,  $k'_u\in\mathbb{I}_{[k'_l+T^2+1,N-1]}$, there exists $k_x\in\mathbb{I}_{[k'_l+T-1,k'_u]}$ such that the optimal trajectory $u^\ast_{N,x,H}\in\mathbb{U}^N(x,H)$ satisfies
	\begin{subequations}\label{eq:turnpike_assertion1}
		\begin{align}
			\norm{(x_{u^\ast_{N,x,H}}(k,x)-x_s,\, u^\ast_{N,x,H}(k)-u_s)}&\leq \epsilon \label{eq:turnpike_assertion1_x,u},\\
			\norm{h(x_{u^\ast_{N,x,H}}(k,x),u^\ast_{N,x,H}(k))-h_s}&\leq L_h \epsilon,  \label{eq:turnpike_assertion1_h}
		\end{align}
	\end{subequations} for all $k\in\mathbb{I}_{[k_x-T+1,k_x]}$ where $\epsilon:=\sigma_T (k'_u-k'_l-T^2)$. 
\end{theo}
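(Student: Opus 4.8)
The plan is to prove the two assertions in turn, deriving the first by pigeonholing Lemma~\ref{lem:turnpike} over the $q$ given trajectories, and the second by applying the first assertion to the optimal trajectory together with its forward time-shifts.

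For the first assertion, I would apply Lemma~\ref{lem:turnpike} to each $u_i$, $i\in\mathbb{I}_{[1,q]}$, with the stated $\delta$. Since $k_{T,N_i}\le T-1$ and $\underline{\vartheta}_h\le 0$ (the latter because $(x_s,u_s)\in\mathbb{Z}$ and $\bl^\top h_s=0$), the constant $C'$ appearing in that lemma is at most $\hat{C}'$, so at most $\hat{C}'/\rho(\epsilon)$ of the indices $k\in\mathbb{I}_{[0,N_i-1]}$ satisfy $(x_{u_i}(k,x_i),u_i(k))\notin\mathcal{B}_\epsilon(x_s,u_s)$. Intersecting with the window $\{k_l,\dots,k_u\}$, which contains $k_u-k_l+1$ integers, at most $\Delta_N$ further window indices can fall outside the domain $\mathbb{I}_{[0,N_i-1]}$ of the $i$-th trajectory, because $k_u\le\hat{N}-1$ gives $\hat N-N_i\le\Delta_N$. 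A union bound over $i=1,\dots,q$ then leaves at least $k_u-k_l+1-q\Delta_N-q\hat{C}'/\rho(\epsilon)$ indices in $\mathcal{P}^{'\epsilon}_{[k_l,k_u]}$; with $\epsilon$ as in~\eqref{eq:intersection_nbhd} we have $q\hat{C}'/\rho(\epsilon)=k_u-k_l-m-q\Delta_N$, so this count equals $m+1\ge m$. (Here $\rho^{-1}$ is defined on all of $[0,\infty)$ since $\rho\in\mathcal{K}_\infty$, and $\epsilon>0$ by the assumed $k_u-k_l-m-q\Delta_N>0$.)

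For the second assertion, I would invoke the first part with $q:=T$, $m:=1$, $k_l:=k'_l$ and $k_u:=k'_u-T+1$, taking as the $T$ trajectories the optimal trajectory and its shifts: for $i\in\mathbb{I}_{[0,T-1]}$ let $u^{(i)}(\cdot):=u^\ast_{N,x,H}(\cdot+i)$ on $\mathbb{I}_{[0,N-1-i]}$, started from $x_{u^\ast_{N,x,H}}(i,x)$ with the memory state $H^{(i)}$ produced from $H$ by $i$ closed-loop updates along $u^\ast_{N,x,H}$. One then checks $H^{(i)}\in\mathbb{H}$ (its entries are values $h(x,u)$ with $(x,u)\in\mathbb{Z}$) and, via the principle of optimality, that $u^{(i)}$ is the optimal control of the horizon-$(N-i)$ problem with data $(x_{u^\ast_{N,x,H}}(i,x),H^{(i)})$, whence $J_{N-i}(x_{u^\ast_{N,x,H}}(i,x),u^{(i)})\le(N-i)\ell_s+\delta$ by the remark after Assumption~\ref{ass:asy_contr}. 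Since here $\hat N=N$ and $\Delta_N=T-1$, the admissibility condition $k_u-k_l-m-q\Delta_N>0$ reduces exactly to $k'_u-k'_l-T^2>0$, which holds by hypothesis, and the first part yields a common index $k^\ast\in\mathbb{I}_{[k'_l,\,k'_u-T+1]}$ lying in every $\mathcal{P}^\epsilon(u^{(i)},x_{u^\ast_{N,x,H}}(i,x))$, with $\epsilon=\rho^{-1}\!\big(T\hat{C}'/(k'_u-k'_l-T^2)\big)$. Unfolding the shifts turns this into $(x_{u^\ast_{N,x,H}}(k,x),u^\ast_{N,x,H}(k))\in\mathcal{B}_\epsilon(x_s,u_s)$ for all $k\in\mathbb{I}_{[k^\ast,\,k^\ast+T-1]}$, which is~\eqref{eq:turnpike_assertion1_x,u} on $\mathbb{I}_{[k_x-T+1,k_x]}$ with $k_x:=k^\ast+T-1\in\mathbb{I}_{[k'_l+T-1,k'_u]}$; then~\eqref{eq:turnpike_assertion1_h} follows from the Lipschitz bound in Assumption~\ref{ass:comp_cont}. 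Finally, setting $\sigma_T(j):=\rho^{-1}(T\hat{C}'/j)$ for $j\ge 1$ (extended suitably at $j=0$) gives $\sigma_T\in\mathcal{L}_{\mathbb{N}}$ and $\epsilon=\sigma_T(k'_u-k'_l-T^2)$.

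The main obstacle is the principle-of-optimality step: one must verify that splicing any feasible tail of the shifted horizon-$(N-i)$ problem behind the first $i$ inputs of $u^\ast_{N,x,H}$ produces an input that is still feasible for the original problem — in particular that the interior average constraints straddling the splice time coincide with the start-up constraints of the shifted problem, which is precisely why the memory state $H^{(i)}$ is the right bookkeeping device — and that the open-loop cost is additive across the splice. Everything else reduces to the pigeonhole count of the first part and routine $\mathcal{K}_\infty$ manipulations.
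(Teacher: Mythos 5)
Your proposal is correct and follows essentially the same route as the paper: Part~I is the identical pigeonhole count from Lemma~\ref{lem:turnpike} with $C'_i\le\hat C'$ and the window/intersection bound yielding $m+1\ge m$ elements, and Part~II applies it with $q=T$, $m=1$, $k_l=k'_l$, $k_u=k'_u-(T-1)$ to the optimal trajectory and its $T-1$ forward shifts, whose optimality (and hence the bound $J_{N_i}\le N_i\ell_s+\delta$ via Assumption~\ref{ass:asy_contr}) follows from the principle of optimality with the shifted memory states $H^i$. Your explicit justification that $\underline{\vartheta}_h\le 0$ (so $C'_i\le\hat C'$) and your accounting of the at most $\Delta_N$ out-of-domain window indices are details the paper leaves implicit, but the argument is the same.
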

\begin{proof}
	\textit{Part I: Set Intersection.}
	We make use of the turnpike property from Lem.~\ref{lem:turnpike} and get $Q^{\epsilon}(u_i,x_i)\geq N_i-\frac{C'_i}{\rho(\epsilon)}$ with $C'_i:=\delta+C-k_{T,N_i}\underline{\vartheta}_h\leq \hat{C}'$. Now, considering the neighborhood~\eqref{eq:intersection_nbhd} yields $Q^{\epsilon}(u_i,x_i)\geq\hat{N}-\frac{1}{q}(k_u-k_l-m) $ for all $i\in\mathbb{I}_{[1,q]}$ which guarantees by combinatorially using set intersections that the intersection contains at least $m$ elements.\\
	\textit{Part II: Showing assertion~\eqref{eq:turnpike_assertion1}.} Given the optimal trajectory $u^\ast_{N,x,H}=:u^\ast_0$, we consider  $T-1$ shifted trajectories  $u_i^\ast\in\mathbb{U}^{N_i}(x_i,H^i)$ for $i\in\mathbb{I}_{[1,T-1]}$ with $N_i=N-i$, $x_i=x_{u^\ast_0}(i,x)$ and $H^i = [ H_{i+1},\dots,H_{T-1},\\h(x_{u^\ast_0}(0,x),u^\ast_0(0)),\dots,\, h(x_{u^\ast_0}(i-1,x),u^\ast_0(i-1)) ]$ for $i\in\mathbb{I}_{[1,T-1]}$ where $H_j$ denotes the $j$-th column of $H$ and $H^0=H$. Since end pieces of optimal trajectories are again optimal, the trajectories $u_i$ are the optimal trajectories for initial condition $(x_i,H^i)$ and horizon $N_i$, i.\,e., we obtain
	\begin{equation}\label{eq:proof_T_cons}
		x_{u^\ast_i} (k,x_i)=x_{u^\ast_0}(k+i,x), \qquad u^\ast_i(k)=u^\ast_0 (k+i)
	\end{equation} for all $k\in\mathbb{I}_{[0,N_i-1]}$, $i\in\mathbb{I}_{[1,T-1]}$. Now, Ass.~\ref{ass:asy_contr} ensures that $J^\ast_{N_i}(x_i,H^i)\leq N \ell_s+\delta$ holds for any $i$. Hence, by considering the trajectories $x_{u^\ast_i}(\cdot,x_i)$ we can use the first part of the theorem with $q=T$, $\Delta_N=T-1$, $m=1$ and note $k_l=k'_l$ as well as $k_u=k'_u-(T-1)$. Choosing the neighborhood~\eqref{eq:intersection_nbhd} with our previous choices ensures that the intersection $\mathcal{P}^{'\epsilon}_{[k'_l,k'_u-(T-1)]}((u^\ast_0,x_0),\dots,(u^\ast_{T-1},x_{T-1}))$ contains at least one element which we denote by $k_x-T+1$. Now,~\eqref{eq:proof_T_cons} implies~\eqref{eq:turnpike_assertion1_x,u} with $\epsilon=\rho^{-1} \left(\frac{T\hat{C}'}{k'_u-k'_l-T^2}\right)=:\sigma_T(k'_u-k'_l-T^2)$.	Furthermore,~\eqref{eq:turnpike_assertion1_h} immediately follows from~\eqref{eq:turnpike_assertion1_x,u} using Lipschitz continuity of $h(x,u)$ (Ass.~\ref{ass:comp_cont}).	
\end{proof}
Note that we can choose $k'_u$ large enough (for a sufficiently large $N$) such that the assertion holds for any $\epsilon>0$ since $\sigma_T\in\mathcal{L}_{\mathbb{N}}$. 
Furthermore, this $T$-step consecutive turnpike in~\eqref{eq:turnpike_assertion1_h} also implies a bound on the extended state $H(k_x)$. In particular, we can upper bound the norm-like measure by using~\eqref{eq:turnpike_assertion1_h}. By definition we obtain
\begin{equation*}
	\Hnorm \leq \HnormB{H-H^s} \leq \norm{H-H^s}_1\leq \sqrt{p}L_h \epsilon
\end{equation*} and hence, satisfaction of $\HnormB{H-H^s} \leq E$ for any $E>0$ can be guaranteed by ensuring that~\eqref{eq:turnpike_assertion1} holds with $\epsilon\leq \frac{E}{\sqrt{p}L_h}$.
\section{Local continuity value function}\label{sec:loc_cont}
In this section, similar to~\citep{Gruene13}, we use a local controllability property to provide local continuity bounds on the value function in Thm.~\ref{thm:loc_cont}.
\begin{ass}\label{ass:loc_contr}
	There exist $\delta_c,\,E_h>0$, $d\in\mathbb{I}_{\geq T}$ and $\gamma_x,\,\gamma_u\in\mathcal{K}_\infty$ such that for each trajectory $x_{u_c} (k,x_c)$ with $u_c\in\mathbb{U}^{d+T}(x_c,H^c)$ satisfying $\HnormB{H^c}\leq E_h$ and $x_{u_c}(k,x_c)\in\mathcal{B}_{\delta_c}(x_s)$ for all $k\in\mathbb{I}_{[0,d+T]}$, the following holds:\\
	 For any trajectory $u_1\in\mathbb{U}^{N_1}(x_1,H^1)$ with $N_1\in\mathbb{I}_{\geq d+T}$, $(x_1,H^1)\in\mathbb{X}\times\mathbb{H}$, $x_3:=x_{u_1}(d,x_1)$ and $H^3:= H(x_{u_1}(d,x_1), u_1(d:d+T-2))$ satisfying $\HnormB{H^3-H^c(d+T-1)}\leq E_h$ and $x_3 \in\mathcal{B}_{\delta_c}(x_s)$, and for any $x_2\in\mathcal{B}_{\delta_c}(x_s)$ and any $H^2$ satisfying. $\HnormB{H^2-H^c}\leq E_h$, there exists an input $u_2\in\mathbb{U}^d(x_2,H^2)$ with $x_{u_2}(d,x_2)=x_3$ such that $u_3$ with $u_3(k)=u_2(k)$ for $k\in\mathbb{I}_{[0,d-1]}$ and $u_3(k)=u_1(k)$ for $k\in\mathbb{I}_{[d,N_1-1]}$ satisfies $u_3\in\mathbb{U}^{N_1}(x_2,H^2)$ and moreover, it holds for all $k\in\mathbb{I}_{[0,d]}$:
\begin{align*}
&\norm{x_{u_2}(k,x_2)-x_{u_c}(k,x_c)}\leq \gamma_x(\zeta),~
\norm{u_2(k)-u_c(k)}\leq \gamma_u (\zeta),
\end{align*}
\begin{align*}
\zeta:=\max \big\{& \norm{x_2-x_c}+\HnormB{H^2-H^c}, \,\\
& \norm{x_3-x_{u_c}(d,x_c)}+\HnormB{H^3-H^c(d+T-1)} \big\}.
\end{align*}
\end{ass}
Note that the Assumptions~\ref{ass:comp_cont}, \ref{ass:dissip} and~\ref{ass:loc_contr} ensure the existence of $\gamma_c,\,\gamma_h\in\mathcal{K}_\infty$ such that it holds for all $k\in\mathbb{I}_{[0,d]}$ and for $x_c$, $x_2$, $u_c(\cdot)$, $u_2(\cdot)$ and $\zeta$ from Ass.~\ref{ass:loc_contr}
\begin{equation}\label{eq:propositions_LCA}
\begin{split}
\left| \ell(x_{u_2}(k,x_2),u_2(k))-\ell(x_{u_c}(k,x_c),u_c(k)) \right| & \leq \gamma_c (\zeta),\\
\left| \tilde{\ell}(x_{u_2}(k,x_2),u_2(k))-\tilde{\ell}(x_{u_c}(k,x_c),u_c(k)) \right| & \leq \gamma_c (\zeta),\\
\left| \bl^\top\left[h(x_{u_2}(k,x_2),u_2(k))-h(x_{u_c}(k,x_c),u_c(k)) \right]\right| & \leq \gamma_h (\zeta),
\end{split}
\end{equation} which follows from continuity~\citep[Prop.~4]{thesis_mario}. Similar to~\cite[Ass.~6.2]{Gruene13}, Assumption~\ref{ass:loc_contr} ensures that given two states $(x_2,H^2)$, $(x_3,H^3)$ close to the optimal steady-state $(x_s,H^s)$, there exists an input trajectory $u_1$, such that we can drive the system from $x_2$ to $x_3$ and then apply any feasible input $u_3\in \mathbb{U}^{N_3}(x_3,H^3)$, while respecting the transient average constraints~\eqref{eq:trans_av_con}.

Now, analogous to~\cite{Gruene13} we can formulate a (turnpike) result for initial conditions in a steady-state neighborhood which implies that some consecutive points of the optimal trajectory stay close to the steady-state.
\begin{lemma}\label{lem:turnpike_P(N)}
	Suppose that Ass.~\ref{ass:comp_cont}-\ref{ass:loc_contr} hold. There exist $N_\eta\in\mathbb{N}$, a function  $\eta: \mathbb{N}\times \mathbb{R}_{\geq0} \to \mathbb{R}_{\geq0}$ with $\eta (N,r)\to 0$ if $N\to \infty$ and $r\to 0$ such that for any $x \in \mathcal{B}_{\delta_c} (x_s)$, any $H\in\mathbb{H}$ with $\HnormB{H-H^s}\leq E_h$ and horizon $N\geq N_\eta$, the optimal trajectory $u^\ast_{N,x,H}\in\mathbb{U}^N(x,H)$ satisfies for all $k\in\mathbb{I}_{[0,\frac{N}{2}+T-1]}$
	\begin{align*}
	&\norm{(x_{u^\ast_{N,x,H}} (k,x)-x_s, \,u^\ast_{N,x,H}(k)-u_s )} \\\leq &\eta \left( N, \norm{x-x_s}+ \HnormB{H-H^s} \right).
	\end{align*}
\end{lemma}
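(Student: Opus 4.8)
The plan is to adapt the proof of the corresponding statement in~\citep{Gruene13}, combining the turnpike results of Section~\ref{sec:Turnpikes} with the local controllability of Ass.~\ref{ass:loc_contr}. Throughout, abbreviate $r:=\norm{x-x_s}+\HnormB{H-H^s}$. The first step is to bound the rotated cost of the optimal trajectory. Applying Ass.~\ref{ass:loc_contr} with reference trajectory $u_c\equiv(x_s,u_s)$ provides a feasible input steering $(x,H)$ to $(x_s,H^s)$ in $d$ steps and then keeping the system at the steady state; by~\eqref{eq:propositions_LCA} its open-loop cost is at most $N\ell_s+d\gamma_c(r)$, hence $J_N^\ast(x,H)\le N\ell_s+\delta_0(r)$ with $\delta_0\in\mathcal{K}$. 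Inserting this into $\tilde{J}_N(x,u)=J_N(x,u)-N\ell_s+\lambda(x)-\lambda(x_u(N,x))+\bl^\top\sum_{k=0}^{N-1}h(x_u(k,x),u(k))$, and using $\tilde{\ell}\ge0$ (see~\eqref{eq:propositions_bounds_tilde_l}), boundedness of $\lambda$, and the feasibility bound~\eqref{eq:bound_h_H}, one obtains $\tilde{J}_N(x,u^\ast_{N,x,H})\le C_2$ for a constant $C_2$ depending only on $r$ and the fixed problem data. In particular, the turnpike estimates of Lem.~\ref{lem:turnpike} and Thm.~\ref{thm:turnpike_T_consec} are then available for $u^\ast_{N,x,H}$ over the whole horizon.

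Next, I would argue by contradiction: assume there is $k^\ast\in\mathbb{I}_{[0,\frac{N}{2}+T-1]}$ with $\norm{(x_{u^\ast_{N,x,H}}(k^\ast,x)-x_s,\,u^\ast_{N,x,H}(k^\ast)-u_s)}>\epsilon'$. By Thm.~\ref{thm:turnpike_T_consec}, applied to $u^\ast_{N,x,H}$ on an interval of the second half of the horizon located just after $k^\ast$, there is a time $k_2$ with $\mathbb{I}_{[k_2-T+1,k_2]}\subseteq\mathcal{P}^{\epsilon''}(u^\ast_{N,x,H},x)$, where $\epsilon''=\sigma_T(\cdot)\to0$ as the interval length (hence $N$) grows; by~\eqref{eq:turnpike_assertion1_h} this yields $\HnormB{H^{\mathrm{cl}}(k_2,x,H)-H^s}\le\sqrt{p}L_h\epsilon''$. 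Choosing $N_\eta$ large enough that $\epsilon''$ and $\sqrt{p}L_h\epsilon''$ fall below the thresholds $\delta_c,E_h$ of Ass.~\ref{ass:loc_contr}, I construct $u^{\mathrm{new}}\in\mathbb{U}^N(x,H)$ consisting of the Ass.~\ref{ass:loc_contr}-input steering $(x,H)\to(x_s,H^s)$ in $d$ steps, the steady-state input up to time $k_2-d$, the Ass.~\ref{ass:loc_contr}-input steering $(x_s,H^s)$ to $(x_{u^\ast_{N,x,H}}(k_2,x),H^{\mathrm{cl}}(k_2,x,H))$ at time $k_2$, and finally the optimal tail $u^\ast_{N,x,H}$ on $\mathbb{I}_{[k_2,N-1]}$. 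Ass.~\ref{ass:loc_contr} guarantees feasibility, and by construction $x_{u^{\mathrm{new}}}(k_2,x)=x_{u^\ast_{N,x,H}}(k_2,x)$ and $x_{u^{\mathrm{new}}}(N,x)=x_{u^\ast_{N,x,H}}(N,x)$.

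For the cost comparison I would use the splitting $\ell(x,u)=\ell_s-\lambda(x)+\lambda(f(x,u))-\bl^\top h(x,u)+\tilde{\ell}(x,u)$. Since $u^{\mathrm{new}}$ and $u^\ast_{N,x,H}$ agree on $\mathbb{I}_{[k_2,N-1]}$ and reach the same state at time $k_2$, the telescoped storage terms cancel and $J_N(x,u^{\mathrm{new}})-J_N(x,u^\ast_{N,x,H})=\sum_{k=0}^{k_2-1}[\tilde{\ell}(u^{\mathrm{new}})-\tilde{\ell}(u^\ast_{N,x,H})]-\bl^\top\sum_{k=0}^{k_2-1}[h(u^{\mathrm{new}})-h(u^\ast_{N,x,H})]$. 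On the two local-controllability patches~\eqref{eq:propositions_LCA} bounds $\tilde{\ell}(u^{\mathrm{new}})$ and $|\bl^\top h(u^{\mathrm{new}})|$ by $\mathcal{K}$-functions of $r$ and of $\epsilon''$, while on the steady-state piece both vanish; on the $u^\ast$-side $\sum_{k<k_2}\tilde{\ell}(u^\ast_{N,x,H})\ge\rho(\epsilon')$ (as $\tilde{\ell}\ge\rho(\norm{(x-x_s,u-u_s)})$, cf.\ the proof of Lem.~\ref{lem:turnpike}), and $\bl^\top\sum_{k<k_2}h(u^\ast_{N,x,H})$ is estimated by splitting $\mathbb{I}_{[0,k_2-1]}$ into full $T$-windows (each $\le0$ by~\eqref{eq:trans_av_con}) plus a short remainder lying inside the near-steady-state window $\mathbb{I}_{[k_2-T+1,k_2-1]}$, combined with the memory-dependent bound~\eqref{eq:bound_h_H}. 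Collecting the terms, $J_N(x,u^{\mathrm{new}})<J_N(x,u^\ast_{N,x,H})$ as soon as $\rho(\epsilon')$ exceeds an expression of the form $c_1\gamma_c(r)+c_1\gamma_h(r)+\sigma(N)$ with $\sigma\in\mathcal{L}_{\mathbb{N}}$, contradicting optimality of $u^\ast_{N,x,H}$; inverting $\rho$ then gives the claimed bound with a suitable $\eta$, and the restriction $k\le\frac{N}{2}+T-1$ is exactly what leaves room to locate $k_2$ together with its $d$-step approach.

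The step I expect to be the main obstacle is the control of the multiplier/finite-memory contributions. In contrast to~\citep{Gruene13}, translating between $J_N$ and the rotated functional leaves the residual terms $\bl^\top\sum h$, and one must show that $\bl^\top\sum_{k<k_2}h(u^\ast_{N,x,H})$ does not overwhelm the gain $\rho(\epsilon')$; this is what forces the reconnection time $k_2$ (and, when $k^\ast$ is bounded away from $0$, also a left reconnection time $k_1<k^\ast$) to be chosen as a $T$-consecutive turnpike point via Thm.~\ref{thm:turnpike_T_consec}, so that the uncontrolled remainder of the window-based estimate falls inside a $\mathcal{B}_{\epsilon''}(x_s,u_s)$-window and is dominated by $\mathcal{K}$-functions of $\epsilon''$. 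The most delicate piece is the part of that estimate governed by the slack of the average constraints — essentially $\HnormB{H^s-H}$, which under the hypotheses is only bounded and not small — which has to be handled via the turnpike decay and the choice of $N_\eta$; this bookkeeping, rather than any single inequality, is where the real effort lies.
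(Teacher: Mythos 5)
Your proposal is correct and follows essentially the same route as the paper's (sketched) proof: a candidate built from Ass.~\ref{ass:loc_contr} that steers $(x,H)$ to $(x_s,H^s)$, waits at the steady state, and reconnects to the optimal trajectory at a $T$-consecutive turnpike time in the second half of the horizon, followed by a dissipativity/contradiction argument. One remark on the obstacle you single out at the end: it is not actually there. Once you estimate $\bl^\top\sum_{k=0}^{k_2-1}h(x_{u^\ast}(k,x),u^\ast(k))$ by complete $T$-windows anchored at $k=0$ (each $\leq 0$ by the transient average constraints, independently of the initial memory) plus a remainder of length $<T$ contained in $\mathbb{I}_{[k_2-T+1,k_2-1]}$, where $\bl^\top h\leq \norm{\bl}L_h\epsilon''$ by the $T$-consecutive turnpike at $k_2$, the initial memory $H$ --- and hence $\HnormB{H^s-H}$ --- never enters this bound; the additional appeal to~\eqref{eq:bound_h_H} is unnecessary, and the contradiction threshold reduces to $\mathcal{K}$-functions of $r$ and $\epsilon''$ as you intend.
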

\begin{proof}
	The proof is similar to~\cite[Lem.~6.3]{Gruene13}. By using local controllability, we construct a candidate sequence which steers the system to the steady-state $(x_s,H^s)$ and after $k_y-d$ steps, with $k_y\in\mathbb{I}_{[\frac{N}{2},N-T]}$, from the steady-state back to the optimal trajectory. Feasibility of this candidate can be ensured by a sufficient large prediction horizon and thus $\epsilon$ small (cf. Thm.~\ref{thm:turnpike_T_consec}). Finally, the assertion follows from dissipativity and a proof of contradiction. Details can be found in \cite[Lem.~2]{thesis_mario}.
\end{proof}
Using Lemma~\ref{lem:turnpike_P(N)}, we can show that $J^\ast_N$ is locally continuous for sufficiently large prediction horizons $N$. 
\begin{theo}\label{thm:loc_cont} Let Ass.~\ref{ass:comp_cont}-\ref{ass:loc_contr} hold. There exist $N_2\in\mathbb{N}$ and $\gamma_v\in\mathcal{K}_\infty$ such that for all $\delta\in(0,\delta_c]$, all $E\in(0,E_h]$, all $N\geq N_2$, all $x\in\mathcal{B}_\delta (x_s)$ and all $\norm{H-H^s}_1\leq E$ it holds
	\begin{equation}\label{eq:turnpike_assertion2}
	\left| J^\ast_N(x,H)-J^\ast_N(x_s,H^s) \right| \leq \gamma_v(\delta+E).
	\end{equation}
\end{theo}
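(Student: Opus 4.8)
The plan is to prove the two one-sided estimates $J^\ast_N(x,H)-J^\ast_N(x_s,H^s)\le d\,\gamma_c(\delta+E)$ and $J^\ast_N(x_s,H^s)-J^\ast_N(x,H)\le d\,\gamma_c(\delta+E)$ separately and then to set $\gamma_v:=d\,\gamma_c\in\mathcal{K}_\infty$, with $d$ taken from Ass.~\ref{ass:loc_contr} and $\gamma_c$ from~\eqref{eq:propositions_LCA}. In both directions the candidate is built by concatenating a length-$d$ local-controllability manoeuvre with the tail (from step $d$ onwards) of an optimal trajectory. The essential trick is that \emph{the optimal trajectory itself} is used as the reference $u_c$ in Ass.~\ref{ass:loc_contr}: by Lem.~\ref{lem:turnpike_P(N)} an optimal trajectory started within distance $\delta+E$ of $(x_s,H^s)$ stays $\eta(N,\cdot)$-close to $(x_s,u_s)$ over $\mathbb{I}_{[0,\frac{N}{2}+T-1]}$, so for $N$ large its truncation to horizon $d+T$ stays in $\mathcal{B}_{\delta_c}(x_s)$ and is an admissible reference. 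With this choice the mismatch term $\norm{x_3-x_{u_c}(d,x_c)}+\HnormB{H^3-H^c(d+T-1)}$ in the definition of $\zeta$ vanishes identically, so $\zeta$ reduces to $\norm{x-x_s}+\HnormB{H-H^s}\le\delta+E$ and no $N$-dependent error survives.

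Concretely, I would first fix $N_2\ge N_\eta$ large enough that $\frac{N}{2}+T-1\ge d+T$ for all $N\ge N_2$ and that the trajectories furnished by Lem.~\ref{lem:turnpike_P(N)} for initial data at distance at most $\delta_c+E_h$ from $(x_s,H^s)$ remain in $\mathcal{B}_{\delta_c}(x_s)$ on $\mathbb{I}_{[0,d+T]}$; this is possible since $\eta(N,\cdot)\to0$. I would also record $\HnormB{H-H^s}\le\norm{H-H^s}_1\le E$, $\HnormB{H^s-H}\le\norm{H-H^s}_1\le E$, and $\HnormB{H}\le\HnormB{H-H^s}+\HnormB{H^s}=\HnormB{H-H^s}$, using subadditivity of $\HnormB{\cdot}$ together with $\HnormB{H^s}=0$ (which holds since $h_s\le0$).

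For the upper bound, apply Lem.~\ref{lem:turnpike_P(N)} to $(x_s,H^s)$ and take $u_c$ to be the truncation of $u^\ast_{N,x_s,H^s}$ to horizon $d+T$, with $x_c=x_s$, $H^c=H^s$; this is admissible by the above. Invoking Ass.~\ref{ass:loc_contr} with $u_1:=u^\ast_{N,x_s,H^s}$, $x_2:=x$, $H^2:=H$ (admissible since $x\in\mathcal{B}_{\delta_c}(x_s)$ and $\HnormB{H-H^s}\le E_h$) yields $x_3=x_{u_c}(d,x_c)$, $H^3=H^c(d+T-1)$, hence $\zeta=\norm{x-x_s}+\HnormB{H-H^s}\le\delta+E$, and a feasible $u_3\in\mathbb{U}^N(x,H)$ that coincides with $u^\ast_{N,x_s,H^s}$ on $\mathbb{I}_{[d,N-1]}$. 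Then $J^\ast_N(x,H)\le J_N(x,u_3)$, the steps $d,\dots,N-1$ cancel against $J^\ast_N(x_s,H^s)=J_N(x_s,u^\ast_{N,x_s,H^s})$, and~\eqref{eq:propositions_LCA} bounds the remaining sum by $\sum_{k=0}^{d-1}\gamma_c(\zeta)\le d\,\gamma_c(\delta+E)$. The lower bound is symmetric: apply Lem.~\ref{lem:turnpike_P(N)} to $(x,H)$ (admissible since $x\in\mathcal{B}_\delta(x_s)\subseteq\mathcal{B}_{\delta_c}(x_s)$ and $\HnormB{H-H^s}\le E\le E_h$), let $u_c$ be the truncation of $u^\ast_{N,x,H}$ to horizon $d+T$ with $x_c=x$, $H^c=H$, and invoke Ass.~\ref{ass:loc_contr} with $u_1:=u^\ast_{N,x,H}$, $x_2:=x_s$, $H^2:=H^s$; again the mismatch term vanishes, $\zeta=\norm{x-x_s}+\HnormB{H^s-H}\le\delta+E$, and the resulting $u_3\in\mathbb{U}^N(x_s,H^s)$ coincides with $u^\ast_{N,x,H}$ on $\mathbb{I}_{[d,N-1]}$, so the same cancellation gives $J^\ast_N(x_s,H^s)-J^\ast_N(x,H)\le d\,\gamma_c(\delta+E)$.

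I expect the main obstacle to be bookkeeping rather than a deep estimate: one must verify all admissibility conditions of Ass.~\ref{ass:loc_contr} (the $\mathcal{B}_{\delta_c}(x_s)$-membership of the relevant states, the $\HnormB{\cdot}\le E_h$ bounds on the various storages, feasibility of the truncated optimal sequences as references, and---delivered directly by the assumption---feasibility of the concatenated input over the full horizon $N$), and, crucially, arrange the construction so that $u_c$ agrees with the optimal trajectory on $\mathbb{I}_{[0,d+T-1]}$; otherwise the mismatch term would inject an $\eta(N,\cdot)$-dependent error that cannot be absorbed into a bound of the required form $\gamma_v(\delta+E)$ uniform in $N\ge N_2$. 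The detailed computations parallel~\citep[Thm.~6.4]{Gruene13} and the corresponding result in~\citep{thesis_mario}.
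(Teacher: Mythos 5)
Your proposal is correct and follows essentially the same route as the paper's proof: both one-sided bounds are obtained by using Lemma~\ref{lem:turnpike_P(N)} to keep the relevant optimal trajectory in $\mathcal{B}_{\delta_c}(x_s)$ over $\mathbb{I}_{[0,d+T]}$, taking that optimal trajectory itself as the reference $u_c$ in Ass.~\ref{ass:loc_contr} so that the mismatch term in $\zeta$ vanishes, and concluding via~\eqref{eq:propositions_LCA} with $\gamma_v=d\gamma_c$. The additional admissibility bookkeeping you carry out (e.g.\ $\HnormB{H}\leq\HnormB{H-H^s}$ via subadditivity and $h_s\leq 0$) is consistent with what the paper leaves implicit.
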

\begin{proof}
	Using Lemma~\ref{lem:turnpike_P(N)}, we choose  $N_2\geq N_\eta$  such that $\eta(N,r)\leq\min\{\delta_c, \frac{E_h}{\sqrt{p}L_h}\}$ holds for all $N\geq N_2$ and $r\in[0,\delta_c+E_h]$ and we abbreviate $u^\ast:=u^\ast_{N,x,H}$. With this choice of $N$, Lemma~\ref{lem:turnpike_P(N)} ensures $x_{u^\ast}\in\mathcal{B}_{\delta_c}(x_s)$ for $k\in\mathbb{I}_{[0,d+T]}$ as well as  $\HnormB{H(d+T-1)-H^s}\leq E_h$, $\HnormB{H^s-H}\leq E_h$. Hence, we can apply Ass.~\ref{ass:loc_contr}. First, we connect the steady-state trajectory with $x_{u^\ast}(d,x)$ which is possible due to Ass.~\ref{ass:loc_contr}, i.\,e., there exists $u_2\in\mathbb{U}^d(x_s,H^s)$ such that $x_{u_2}(d,x_s)=x_{u^\ast}(d,x)$ and $u_2(k)=u^\ast(k),~k\in\mathbb{I}_{[d,N-1]}$. It follows from~\eqref{eq:propositions_LCA} that $\ell(x_{u_2}(k,x_s),u_2(k))\leq \ell(x_{u^\ast}(k,x),u^\ast(k))+\gamma_c (\norm{x-x_s}+\HnormB{H^s-H})$ holds for all $k\in\mathbb{I}_{[0,d-1]}$. This yields $J^\ast_N(x_s,H^s)\leq J_N(x_s,u_2)\leq J^\ast_N(x,H)+d\gamma_c (\norm{x-x_s}+\norm{H-H^s}_1)$. \\
	The upper bound on $J^\ast_N(x,H)$ can be constructed similarly using a trajectory connecting $(x,H)$ with the optimal trajectory starting at the steady-state $(x_s,H^s)$, which proves~\eqref{eq:turnpike_assertion2} with $\gamma_v:=d\gamma_c$.
\end{proof}

\section{Performance guarantees}\label{sec:peformance}
In this section, we derive performance guarantees in terms of value convergence of the closed-loop cost $J^{\mathrm{cl}}_K$ from~\eqref{eq:def_CL_cost_functional} and its rotated analogue $\tilde{ J}^{\mathrm{cl}}_K$. Similar to the set $\mathcal{P}^\epsilon$, we define the set $\mathcal{T}^\epsilon_{[a,b]}(u,x):= \{ k_x\in\mathbb{I}_{[a,b]}: \, (x_u (k_x-i,x), u(k_x-i))\in\mathcal{B}_\epsilon(x_s,u_s), \,\forall i \in\mathbb{I}_{[0,T-1]} \}$ and the intersection set  $\mathcal{T}^{'\epsilon}_{[a,b]}((u_1,x_1),\dots,(u_q,x_q)):= \cap_{i=1}^q\mathcal{T}^{\epsilon}_{[a,b]} (u_i,x_i)$. We introduce the following assumption which holds, e.g., for exponentially stabilizable systems with $\tilde{\ell}$ quadratic.
\begin{ass}\label{ass:fct_psi}
	There exists $\overline{\psi}\in\mathcal{K}_\infty$ such that for any $k_\psi\in\mathbb{I}_{\geq T-1}$, any $N\in\mathbb{N}$, all $(x,H)\in\mathbb{X}\times\mathbb{H}$  it holds:\\
	$-\bl^\top \sum_{k=0}^{k_\psi} h(x_{\tilde{ u}_{N,x,H}^\ast}(k,x), \tilde{ u}_{N,x,H}^\ast(k)) \leq \psi (x,H),$ where we abbreviate $\psi(x,H):=\overline{\psi}(\norm{x-x_s}_1+\norm{H-H^s}_1)$.
\end{ass}
We remark that this assumption is similar to the asymptotic controllability property (Ass.~\ref{ass:asy_contr}), but this time for the auxiliary output $h$ and for the optimal trajectory $\tilde{u}^\ast$.
The following lemma bounds the difference in the open-loop cost of the original problem and the rotated problem  over the first $k_x$ steps. 
\begin{lemma}\label{lem:residuum}
	Let Ass.~\ref{ass:comp_cont}-\ref{ass:fct_psi} hold. There exist  $N_5\in\mathbb{N}$, $R_1,\,R_2\in\mathcal{K}_\infty$ and $\tilde{\sigma}_T\in\mathcal{L}_{\mathbb{N}}$ such that for all $(x,H)\in\mathbb{X}\times\mathbb{H}$, any $N\geq N_5$ with  $\epsilon=\tilde{\sigma}_T(N-N_2-T(2T-1))$ there exist  $k_x\in\mathcal{T}^{'\epsilon}_{[T-1,N-N_2]}((u^\ast_{N,x,H},x),(\tilde{u}^\ast_{N,x,H},x))$. Furthermore, it holds for $J_{k_x}:=J_{k_x}(x,u^\ast_{N,x,H})$,  $\tilde{J}_{k_x}:=\tilde{J}_{k_x}(x,\tilde{u}^\ast_{N,x,H})+k_x\ell(x_s,u_s)-\lambda(x)$: 
	\begin{subequations}\label{eq:ass_residuum}
		\begin{align}
		&\left| J_N^\ast(x,H)-J_{k_x} -J^\ast_{N-k_x}(x_s,H^s) \right|\leq R_1(\epsilon), \label{eq:residuum_1}\\
		& J_{k_x}-\tilde{J}_{k_x}\leq \psi(x,H)+R_2(\epsilon),\label{eq:residuum_2a}\\
		&\tilde{J}_{k_x}-J_{k_x}\leq (T-1)\norm{\bl}L_h\epsilon+  R_2(\epsilon).\label{eq:residuum_2b}
		\end{align} 
	\end{subequations}
\end{lemma}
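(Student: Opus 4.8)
Write $u^\ast:=u^\ast_{N,x,H}$ and $\tilde u^\ast:=\tilde u^\ast_{N,x,H}$. The plan is to obtain $k_x$ from the multi-trajectory turnpike of Theorem~\ref{thm:turnpike_T_consec} and then to reduce the three estimates to the dynamic programming principle, local continuity of the value function (Theorem~\ref{thm:loc_cont}), and a block decomposition of $\sum\bl^\top h$ along $u^\ast$ and $\tilde u^\ast$. To construct $k_x$, I would re-run the combinatorial set-intersection argument from the proof of Theorem~\ref{thm:turnpike_T_consec} for the $q=2T$ trajectories obtained by shifting $u^\ast$ and $\tilde u^\ast$ by $0,\dots,T-1$ steps: the shifts of $u^\ast$ satisfy $Q^\epsilon\ge N_i-\hat{C}'/\rho(\epsilon)$ by Lemma~\ref{lem:turnpike} (tails of $J_N$-optimal trajectories are optimal, and Ass.~\ref{ass:asy_contr} applies), while for the shifts of $\tilde u^\ast$ one instead uses that tails of $\tilde J_N$-optimal trajectories are $\tilde J$-optimal, that $\tilde J^\ast_{N_i}\le\delta$ (Ass.~\ref{ass:asy_contr}), and that $\tilde\ell\ge\rho(\|\cdot\|)\ge 0$ (Ass.~\ref{ass:dissip}, \eqref{eq:propositions_bounds_tilde_l}), hence again $Q^\epsilon\ge N_i-\delta/\rho(\epsilon)\ge N_i-\hat{C}'/\rho(\epsilon)$. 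Choosing $m=1$, $\Delta_N=T-1$ and the interval bounds so that the common instant places $k_x$ in $\mathbb I_{[T-1,N-N_2]}$ gives a $T$-consecutive turnpike at $k_x$ for \emph{both} trajectories with $\epsilon=\rho^{-1}\!\big(2T\hat{C}'/(N-N_2-T(2T-1))\big)=:\tilde\sigma_T(N-N_2-T(2T-1))\in\mathcal L_{\mathbb N}$, and $N_5$ is picked large enough that this argument is valid, that $N-k_x\ge N_2$, and that $\epsilon$ is small enough to invoke Theorem~\ref{thm:loc_cont} near $(x_s,H^s)$.

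For \eqref{eq:residuum_1} I would use the dynamic programming principle: since the tail of $u^\ast$ is optimal, $J^\ast_N(x,H)=J_{k_x}+J^\ast_{N-k_x}(x^\ast_{k_x},H^\ast_{k_x})$, where $x^\ast_{k_x}$ and $H^\ast_{k_x}$ are the state and memory reached along $u^\ast$ at time $k_x$. Because $k_x$ lies in the $T$-consecutive turnpike set of $u^\ast$, \eqref{eq:turnpike_assertion1} yields $x^\ast_{k_x}\in\mathcal B_\epsilon(x_s)$ and, by Lipschitz continuity of $h$, $\|H^\ast_{k_x}-H^s\|_1\le (T-1)\sqrt{p}\,L_h\epsilon$; since $N-k_x\ge N_2$, Theorem~\ref{thm:loc_cont} bounds $|J^\ast_{N-k_x}(x^\ast_{k_x},H^\ast_{k_x})-J^\ast_{N-k_x}(x_s,H^s)|$ by $\gamma_v$ evaluated at a fixed (depending on $T,p,L_h$) multiple of $\epsilon$, which I take as $R_1(\epsilon)$; this is \eqref{eq:residuum_1}.

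For \eqref{eq:residuum_2a}--\eqref{eq:residuum_2b}, writing $\tilde\ell$ out in $\tilde J_{k_x}(x,\tilde u^\ast)$ collapses the telescoping $\lambda$-terms and gives $\tilde J_{k_x}=J_{k_x}(x,\tilde u^\ast)-\lambda(x_{\tilde u^\ast}(k_x,x))+\sum_{k=0}^{k_x-1}\bl^\top h(x_{\tilde u^\ast}(k,x),\tilde u^\ast(k))$, so that $J_{k_x}-\tilde J_{k_x}$ is the sum of $J_{k_x}(x,u^\ast)-J_{k_x}(x,\tilde u^\ast)$, of $\lambda(x_{\tilde u^\ast}(k_x,x))$, and of $-\sum_{k=0}^{k_x-1}\bl^\top h(x_{\tilde u^\ast}(k,x),\tilde u^\ast(k))$. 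The $\lambda$-term is $\le\alpha_\lambda(\epsilon)$ since $x_{\tilde u^\ast}(k_x,x)\in\mathcal B_\epsilon(x_s)$. For the running $\bl^\top h$-sum along any trajectory with a $T$-consecutive turnpike at $k_x$ I would partition $\{0,\dots,k_x-1\}$ into full blocks of length $T$ plus a terminal remainder of length $<T$ lying inside $\mathbb I_{[k_x-T+1,k_x]}$: each full block has $\sum h\le 0$ by the transient average constraints, hence contributes $\le 0$ because $\bl\ge 0$, while on the remainder $\|h-h_s\|\le L_h\epsilon$ and $\bl^\top h_s=0$, so $\sum_{k=0}^{k_x-1}\bl^\top h\le (T-1)\norm{\bl}L_h\epsilon$, and the matching lower bound $-\sum_{k=0}^{k_x-1}\bl^\top h(\tilde u^\ast)\le\psi(x,H)$ is precisely Ass.~\ref{ass:fct_psi}. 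Finally, for the cost-gap $J_{k_x}(x,u^\ast)-J_{k_x}(x,\tilde u^\ast)$ I would note that concatenating the first $k_x$ inputs of one trajectory with the optimal (resp.\ rotated-optimal) completion from the state it reaches is again admissible — the memory $H$ is defined precisely so that $\mathbb U^N(x,H)$ splits consistently into a prefix and a tail linked through the memory — whence optimality of $u^\ast$ for $J_N$ together with \eqref{eq:residuum_1} and Theorem~\ref{thm:loc_cont} gives $J_{k_x}(x,u^\ast)-J_{k_x}(x,\tilde u^\ast)\le\gamma_v(c\epsilon)+R_1(\epsilon)$, and optimality of $\tilde u^\ast$ for the additive rotated cost together with a local continuity estimate for the rotated value function (shown as in Theorem~\ref{thm:loc_cont}) gives $J_{k_x}(x,\tilde u^\ast)-J_{k_x}(x,u^\ast)\le \tilde\gamma_v(c\epsilon)+2\alpha_\lambda(\epsilon)+(T-1)\norm{\bl}L_h\epsilon-\sum_{k=0}^{k_x-1}\bl^\top h(\tilde u^\ast)$. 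Substituting these, the sums $\sum_{k=0}^{k_x-1}\bl^\top h(\tilde u^\ast)$ cancel when assembling \eqref{eq:residuum_2b}, while the one surviving such term becomes $\psi(x,H)$ in \eqref{eq:residuum_2a}; collecting the $\mathcal K_\infty$ remainders (and enlarging $N_5$/$\tilde\sigma_T$ so every argument stays positive) yields $R_2$.

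I expect the cost-gap $J_{k_x}(x,u^\ast)-J_{k_x}(x,\tilde u^\ast)$ to be the main obstacle. Its upper bound is short once one observes that prefix/optimal-completion concatenations are feasible, but the lower bound forces the argument through the rotated value function and therefore requires its own local-continuity result near $(x_s,H^s)$ as well; one must also track the $\bl^\top h$-sums carefully so that the genuinely non-vanishing quantity $\psi(x,H)$ enters \eqref{eq:residuum_2a} but not \eqref{eq:residuum_2b}.
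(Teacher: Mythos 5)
Your proposal is correct and follows essentially the same route as the paper's (sketched) proof: the existence of $k_x$ via the set-intersection turnpike for the $2T$ shifted copies of $u^\ast$ and $\tilde u^\ast$ (your $\epsilon$ reproduces the paper's $\tilde\sigma_T=\rho^{-1}\bigl(2T\hat{C}'/(N-N_2-T(2T-1))\bigr)$ exactly), \eqref{eq:residuum_1} via the dynamic programming principle and Theorem~\ref{thm:loc_cont}, the lower bound on $\sum\bl^\top h$ along $\tilde u^\ast$ from Ass.~\ref{ass:fct_psi} for \eqref{eq:residuum_2a}, and the block decomposition into full $T$-windows plus a remainder in the turnpike neighborhood for the upper bound $(T-1)\norm{\bl}L_h\epsilon$ in \eqref{eq:residuum_2b}. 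You correctly identify the one ingredient the paper leaves implicit — local continuity of the \emph{rotated} value function near $(x_s,H^s)$, which follows verbatim from the proof of Theorem~\ref{thm:loc_cont} using the second line of \eqref{eq:propositions_LCA} — and your cancellation of the $\sum\bl^\top h(\tilde u^\ast)$ terms in assembling \eqref{eq:residuum_2b} checks out.
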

\begin{proof}
The proof is an extension of~\cite[Sec.~7]{Gruene13}. The existence of $k_x$ follows using turnpike properties for consecutive time instants for a horizon $N\geq N_5:=N_2+\frac{2T\hat{C}'}{\rho(\bar{\epsilon})}+T(2T-1)$  with $ \bar{\epsilon}:=\min\{\delta_c, E_h/(\sqrt{p}L_h)\}$ and
 $\tilde{\sigma}_T:=\rho^{-1} \left(\frac{2T\hat{C'}}{N-N_2-T(2T-1)}\right)$. The bound~\eqref{eq:residuum_1} then follows using local continuity of the value function (Thm.~\ref{thm:loc_cont}), for details see~\cite[Ch.~5]{thesis_mario}. In order to show~\eqref{eq:residuum_2a},~\eqref{eq:residuum_2b} we need to bound the auxiliary output $\overline{\lambda}^\top h$. 
 By using Ass.~\ref{ass:fct_psi}, we obtain a lower bound on the auxiliary output of the optimal trajectory of the rotated problem which shows~\eqref{eq:residuum_2a}.
 Furthermore, using~\eqref{eq:bound_h_H} in combination with $k_x\in\mathcal{T}^{'\epsilon}_{[T-1,N-N_2]}$ implies an upper bound which yields~\eqref{eq:residuum_2b}.
\end{proof}
In order to construct a local candidate sequence, we show that for the steady-state neighborhood from Thm.~\ref{thm:loc_cont}, there exists a single control step implying $\mathcal{K}_\infty$-bounds w.\,r.\,t. the initial condition $(x,H)$.
\begin{propo}\label{prop:bounds_f_l_h} Let Ass.~\ref{ass:comp_cont} and~\ref{ass:loc_contr} hold. There exist $ \gamma_f$, $\gamma_{l}$, $\gamma_y\in\mathcal{K}_\infty$ such that for   all $\delta\in(0,\delta_c]$, all $x\in\mathcal{B}_\delta (x_s)$, all $E\in(0,E_h]$ and all $\HnormB{H-H^s}\leq E$  there exists $u_x\in\mathbb{U}(x,H)$ such that $f(x,u_x)\in\mathbb{X}$ and it holds:
	\begin{subequations}\label{eq:ass_bounds_f_l_h}
		\begin{align}
		\norm{f(x,u_x)-x_s}&\leq \gamma_f (\delta+E),\label{eq:ass_bound_f_l_h_a)}\\
		\norm{h(x,u_x)-h_s}&\leq \gamma_y (\delta+E),\label{eq:ass_bound_f_l_h_b)} \\
		\ell(x,u_x)-\ell_s&\leq  \gamma_{l} (\delta+E).\label{eq:ass_bounds_f_l_h_cost}
		\end{align}
	\end{subequations}
\end{propo}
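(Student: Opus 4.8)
The plan is to invoke the local controllability Assumption~\ref{ass:loc_contr} with the constant steady-state trajectory as the reference, and then to take $u_x$ as the first element of the resulting $d$-step maneuver. Concretely, I would instantiate the reference data in Ass.~\ref{ass:loc_contr} by $x_c:=x_s$, $H^c:=H^s$ and $u_c(k):=u_s$ for all $k$. Since $h_s\leq 0$ componentwise, $\HnormB{H^s}=0\leq E_h$, and the constant input $u_s$ is feasible over any horizon because $\sum_{i=j}^{T-1}H^s_i+\sum_{k=0}^{j-1}h_s=Th_s\leq 0$ and likewise $\sum_{k=i}^{i+T-1}h_s=Th_s\leq 0$; moreover $x_{u_c}(k,x_s)=x_s\in\mathcal{B}_{\delta_c}(x_s)$ for all $k$. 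For the trajectory $u_1$ I would again take $u_1\equiv u_s$ with horizon $N_1:=d+T$, so that $x_3=x_{u_1}(d,x_1)=x_s\in\mathcal{B}_{\delta_c}(x_s)$ and $H^3=H^s$, whence $\HnormB{H^3-H^c(d+T-1)}=\HnormB{H^s-H^s}=0\leq E_h$. Finally I would choose $x_2:=x$ and $H^2:=H$; these satisfy $x\in\mathcal{B}_\delta(x_s)\subseteq\mathcal{B}_{\delta_c}(x_s)$ since $\delta\leq\delta_c$, and $\HnormB{H-H^s}\leq E\leq E_h$.

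With these choices Ass.~\ref{ass:loc_contr} yields an input $u_2\in\mathbb{U}^d(x,H)$ with $x_{u_2}(d,x)=x_s$ and the bounds $\norm{x_{u_2}(k,x)-x_s}\leq\gamma_x(\zeta)$ and $\norm{u_2(k)-u_s}\leq\gamma_u(\zeta)$ for all $k\in\mathbb{I}_{[0,d]}$, where $\zeta=\max\{\norm{x-x_s}+\HnormB{H-H^s},\,0\}\leq\delta+E$. I then set $u_x:=u_2(0)$. Because $u_2\in\mathbb{U}^d(x,H)$ with $d\geq T\geq 1$, the pair $(x,u_x)=(x_{u_2}(0,x),u_2(0))$ lies in $\mathbb{Z}$ and the only history constraint that constrains a single step, $\sum_{i=1}^{T-1}H_i+h(x,u_x)\leq 0$, is inherited from $u_2$; hence $u_x\in\mathbb{U}(x,H)$ and $f(x,u_x)\in\mathbb{X}$ by Ass.~\ref{ass:comp_cont}.

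The three estimates then follow by evaluating the Ass.~\ref{ass:loc_contr} bounds at $k=1$ and $k=0$: first, $\norm{f(x,u_x)-x_s}=\norm{x_{u_2}(1,x)-x_s}\leq\gamma_x(\zeta)\leq\gamma_x(\delta+E)$, so we may take $\gamma_f:=\gamma_x$. Second, combining $\norm{u_x-u_s}\leq\gamma_u(\zeta)\leq\gamma_u(\delta+E)$ with $\norm{x-x_s}\leq\delta\leq\delta+E$ and the Lipschitz continuity of $h$ gives $\norm{h(x,u_x)-h_s}\leq L_h\big(\norm{x-x_s}+\norm{u_x-u_s}\big)\leq L_h\big((\delta+E)+\gamma_u(\delta+E)\big)=:\gamma_y(\delta+E)$. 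Third, continuity of $\ell$ (Ass.~\ref{ass:comp_cont}) gives $\ell(x,u_x)-\ell_s\leq\alpha_l\big(\norm{x-x_s}+\norm{u_x-u_s}\big)\leq\alpha_l\big((\delta+E)+\gamma_u(\delta+E)\big)=:\gamma_l(\delta+E)$. The functions $\gamma_f$, $\gamma_y(r):=L_h(r+\gamma_u(r))$ and $\gamma_l(r):=\alpha_l(r+\gamma_u(r))$ belong to $\mathcal{K}_\infty$ as sums and compositions of $\mathcal{K}_\infty$ functions, which proves the proposition.

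I expect the main obstacle to be the careful bookkeeping in verifying the hypotheses of Ass.~\ref{ass:loc_contr}: checking that the constant steady-state trajectory is an admissible reference (feasibility over the horizon $d+T$, remaining in $\mathcal{B}_{\delta_c}(x_s)$, and $\HnormB{H^s}=0$), and matching the various extended-state arguments $H^c$, $H^c(d+T-1)$, $H^3$ all to $H^s$. Once this alignment is in place, everything else is a routine continuity estimate; the one minor point is the precise meaning of $\mathbb{U}(x,H)=\mathbb{U}^1(x,H)$ when $T>1$, which is settled by observing that the first step of the feasible sequence $u_2$ already satisfies every constraint that restricts a single step.
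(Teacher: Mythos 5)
Your proposal is correct and follows essentially the same route as the paper: apply Ass.~\ref{ass:loc_contr} with the constant steady-state reference $(u_c,x_c,H^c)=(u_s,x_s,H^s)$, set $u_x:=u_2(0)$, and read off $\gamma_f:=\gamma_x$ and the continuity estimates for $h$ and $\ell$; the paper's (one-line) proof does exactly this, with the immaterial difference that it writes $\gamma_y:=L_h(\gamma_x+\gamma_u)$ and $\gamma_l:=\alpha_l(\gamma_x+\gamma_u)$ where you bound $\norm{x-x_s}$ by $\delta+E$ directly instead of by $\gamma_x(\zeta)$.
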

\begin{proof}
The assertion follows from Ass.~4 for $(u_c,x_c,H^c)=(u_s,x_s,H^s)$, $u_x=u_2(0)$ and $\gamma_f:=\gamma_x$, $\gamma_y:=L_h(\gamma_x+\gamma_u)$, $\gamma_l:=\alpha_l (\gamma_x+\gamma_u)$, using continuity from Ass.~\ref{ass:comp_cont}.
\end{proof}
The following theorem uses Prop.~1 and Thm.~2 to construct a feasible candidate solution and provide an upper bound on the closed-loop cost and closed-loop rotated cost. Considering the closed loop, we write for the extended state $\chi:=(x,H)$ and $\chi_{\mu_N}(k,\chi):=(x_{\mu_N}(k,\chi),H^{\mathrm{cl}}(k,\chi))$.
\begin{theo}\label{thm:performance}
	Let Ass.~\ref{ass:comp_cont}-\ref{ass:fct_psi} hold. There exist $N_6\in\mathbb{N}$ and $\sigma_3, \sigma_6\in\mathcal{L}_{\mathbb{N}}$ such that for all $(x,H)\in\mathbb{X}\times \mathbb{H}$, all $K\in\mathbb{N}$ and all $N\geq N_6+1$ it holds with $N_{6,1}:=N_2+T^2+1$ and $N_{6,2}:=N_2+(4T+1)(T-1)+1$
	\begin{subequations}
		\begin{align}
		\label{eq:value_conv_ass}
		J^{\mathrm{cl}}_K (\chi) \leq &J^\ast_N (\chi)-J^\ast_N(\chi_{\mu_N}(K,\chi))\\&+K(\ell_s+\sigma_3(N-N_{6,1})),\nonumber\\		
		\label{eq:value_conv_ass_rotated}
		\tilde{ J}_K^{\mathrm{cl}}(\chi)\leq &\tilde{ J}^\ast_N(\chi)-\tilde{ J}^\ast_N(\chi_{\mu_N}(K,\chi))+\sigma_6(N-N_{6,2})\\&+K\sigma_3 (N-N_{6,1})+\psi(x,H)+\sum_{k=0}^{K-1} \bl^\top h_{\mu_N}(k,\chi).\nonumber
		\end{align}
	\end{subequations}
\end{theo}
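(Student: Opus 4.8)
The plan is to mimic the dynamic-programming-with-residuals argument of \cite[Sec.~7--8]{Gruene13}, carried out for the extended state $\chi=(x,H)$ and using the $T$-step consecutive turnpike from Thm.~\ref{thm:turnpike_T_consec} in place of the single-step turnpike. I would prove the two inequalities inductively over $K$, so it suffices to treat $K=1$ and then concatenate. For $K=1$: apply Lemma~\ref{lem:residuum} to $(x,H)$ with a horizon $N\geq N_6+1$ to obtain a time $k_x\in\mathcal{T}^{'\epsilon}_{[T-1,N-N_2]}$ common to the optimal trajectory of the original and of the rotated problem, with $\epsilon=\tilde\sigma_T(N-N_2-T(2T-1))$ small. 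Near $k_x$ the state $(x_{u^\ast}(k_x,x),H(k_x))$ is $\mathcal{O}(\epsilon)$-close to $(x_s,H^s)$, so by Prop.~\ref{prop:bounds_f_l_h} (with $\delta+E=\mathcal{O}(\epsilon)$) there is a single admissible control step at that state whose successor is again $\mathcal{O}(\epsilon)$-close to $x_s$ and whose stage cost exceeds $\ell_s$ by at most $\gamma_l(\mathcal{O}(\epsilon))$. Splicing this step into the tail of $u^\ast_{N,x,H}$ after time $k_x$ — and before that reusing the first $k_x$ entries of $u^\ast_{N,x,H}$ — produces a feasible candidate of horizon $N$ for the successor state $\chi_{\mu_N}(1,\chi)$; here feasibility of the transient-average constraints is exactly what the extended-state bookkeeping and $k_x\in\mathcal{T}^{'\epsilon}$ guarantee. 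Evaluating $J^\ast_N$ on this candidate and subtracting off the first stage cost gives
\begin{align*}
J^\ast_N(\chi_{\mu_N}(1,\chi)) \leq J^\ast_N(\chi)-\ell(x,\mu_N(x,H))+\text{(residual)},
\end{align*}
where the residual collects $R_1(\epsilon)$ from \eqref{eq:residuum_1}, $\gamma_l(\mathcal{O}(\epsilon))$ from Prop.~\ref{prop:bounds_f_l_h}, and two applications of local continuity of $J^\ast_N$ (Thm.~\ref{thm:loc_cont}) to compare $J^\ast_{N-k_x}$ at nearby steady-state-like states. Since $\ell(x,\mu_N(x,H))$ is the first closed-loop stage cost, rearranging yields \eqref{eq:value_conv_ass} for $K=1$ with $\sigma_3(N-N_{6,1})$ absorbing the whole $\mathcal{O}(\epsilon)$ residual; note $N_{6,1}=N_2+T^2+1$ is precisely the horizon offset needed so that $\epsilon\to0$ as $N\to\infty$ via $\sigma_T\in\mathcal{L}_{\mathbb N}$.

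For the rotated inequality \eqref{eq:value_conv_ass_rotated} I would run the identical splicing construction but evaluate the rotated cost $\tilde J^\ast_N$ on the candidate, using $\tilde\ell\geq0$ and the telescoping identity defining $\tilde J_N$. The extra terms relative to \eqref{eq:value_conv_ass} are: the difference between the genuine and rotated open-loop costs over the first $k_x$ steps, which is controlled \emph{from above} by $(T-1)\norm{\bl}L_h\epsilon+R_2(\epsilon)$ via \eqref{eq:residuum_2b} and \emph{from below} using \eqref{eq:residuum_2a} together with Ass.~\ref{ass:fct_psi}, the latter producing the $\psi(x,H)$ term; and the accumulated $\bl^\top h$ correction coming from the rotation, whose closed-loop sum is exactly $\sum_{k=0}^{K-1}\bl^\top h_{\mu_N}(k,\chi)$ after telescoping. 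The constants $\lambda(x)$, $k_x\ell_s$ appearing in the definition of $\tilde J_{k_x}$ in Lemma~\ref{lem:residuum} are what make the rotated storage function cancel correctly across one step; $\sigma_6(N-N_{6,2})$ absorbs the one-off $R_1,R_2,\gamma_v$ residuals, while $K\sigma_3(N-N_{6,1})$ absorbs the per-step ones. Choosing $N_6:=\max\{N_5,N_{6,2}\}-1$ (or similar) makes all invoked lemmas applicable.

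Then I would close the induction: assume both bounds hold up to horizon-step $K-1$ for every initial condition, apply the $K=1$ case at the state $\chi_{\mu_N}(K-1,\chi)$, and add, using that the closed-loop cost and the optimal-value terms both telescope ($J^{\mathrm{cl}}_K=J^{\mathrm{cl}}_{K-1}+\ell(\cdot)$ at step $K-1$, and likewise for $\tilde J^{\mathrm{cl}}_K$), so the $\mathcal{O}(\epsilon)$ per-step residual accumulates linearly into $K\sigma_3(N-N_{6,1})$ and the one-time residuals do not re-accumulate. I expect the main obstacle to be \emph{feasibility of the spliced candidate} with respect to the transient average constraints: after one closed-loop step the extended state $H$ has shifted, so the reused tail of $u^\ast_{N,x,H}$ must be re-certified against the sliding-window constraints $\sum_{k=i}^{i+T-1}h\leq0$, and the windows straddling time $k_x$ (where the inserted control step sits) need the bound \eqref{eq:bound_h_H} together with the $T$-step turnpike estimate \eqref{eq:turnpike_assertion1_h} to keep each partial sum nonpositive. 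This bookkeeping is routine but delicate, and is exactly where the consecutive-turnpike strengthening over \cite{Gruene13} is indispensable; the analytic estimates, by contrast, are straightforward consequences of Thm.~\ref{thm:loc_cont}, Lemma~\ref{lem:residuum} and Prop.~\ref{prop:bounds_f_l_h}.
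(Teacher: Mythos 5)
Your route for \eqref{eq:value_conv_ass} is essentially the paper's: a $T$-consecutive turnpike time $k_x$, one inserted control step from Prop.~\ref{prop:bounds_f_l_h} with cost at most $\ell_s+\gamma_l(\mathcal{O}(\epsilon))$, two applications of Thm.~\ref{thm:loc_cont} to compare the optimal tails at the nearby states before and after the inserted step, and telescoping along the closed loop. The paper packages the telescoping via the dynamic programming identity (bounding $J^\ast_N-J^\ast_{N-1}\leq\ell_s+\sigma_3$ with an $(N{+}1)$-step candidate) rather than an explicit induction over $K$, but that difference is cosmetic; your emphasis on re-certifying the sliding-window constraints straddling $k_x$ via \eqref{eq:turnpike_assertion1_h} is exactly where the consecutive turnpike is used. (Invoking $R_1(\epsilon)$ from Lemma~\ref{lem:residuum} in this part is unnecessary: the paper only needs Thm.~\ref{thm:turnpike_T_consec}, Prop.~\ref{prop:bounds_f_l_h} and Thm.~\ref{thm:loc_cont} here.)

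The genuine gap is in how you assemble \eqref{eq:value_conv_ass_rotated}. You propose to prove it for $K=1$ and then induct, applying the one-step rotated bound at each $\chi_{\mu_N}(k,\chi)$ and summing. But the one-step rotated bound carries the offsets $\psi(\cdot)$ and $\sigma_6(\cdot)$, so this summation yields $\sum_{k=0}^{K-1}\psi(\chi_{\mu_N}(k,\chi))$ and $K\sigma_6(N-N_{6,2})$, not the single $\psi(x,H)$ and single $\sigma_6(N-N_{6,2})$ asserted in the theorem; your claim that ``the one-time residuals do not re-accumulate'' is precisely what fails under the stated induction. The paper avoids this by not inducting on the rotated inequality at all: it first establishes \eqref{eq:value_conv_ass} for general $K$, then uses the exact identity
\begin{align*}
\tilde{J}_K^{\mathrm{cl}}(\chi)=J_K^{\mathrm{cl}}(\chi)-K\ell_s+\lambda(x)-\lambda(x_{\mu_N}(K,\chi))+\sum_{k=0}^{K-1}\bl^\top h_{\mu_N}(k,\chi),
\end{align*}
and converts the difference $J^\ast_N(\chi)-J^\ast_N(\chi_{\mu_N}(K,\chi))$ into $\tilde{J}^\ast_N(\chi)-\tilde{J}^\ast_N(\chi_{\mu_N}(K,\chi))$ by applying Lemma~\ref{lem:residuum} \emph{only at the two endpoints} $\chi$ and $\chi_{\mu_N}(K,\chi)$. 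This requires a turnpike time $k_x$ common to four trajectories simultaneously (original and rotated problems at both endpoints), which is why $N_{6,2}$ hides a factor $4T$ — a step your plan omits. The ingredients you list in the middle of your rotated argument (\eqref{eq:residuum_2a} with Ass.~\ref{ass:fct_psi} producing $\psi(x,H)$ once, \eqref{eq:residuum_2b}, and the rotation telescoping giving $\sum_{k=0}^{K-1}\bl^\top h_{\mu_N}(k,\chi)$) are the right ones; you need to drop the induction for the second inequality and run this endpoint comparison a single time.
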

\begin{proof}
		
	\textit{Part I: Showing~\eqref{eq:value_conv_ass}}.
	We set	$k'_l=0$ and $k'_u=N-N_2\geq T^2+1$ and get from Thm.~\ref{thm:turnpike_T_consec} that there are $T$ consecutive time instants in the interval $\mathbb{I}_{[0,N-N_2]}$ in a nbhd. of the steady-state. Furthermore, we propose a candidate sequence  $\hat{u}_{N,\chi}=\hat{u}\in\mathbb{U}^{N+1}(\chi)$ which reads as follows: $\hat{u}(k)=u^\ast_{N,\chi}(k)$ for $k\in\mathbb{I}_{[0,k_x-1]}$, $\hat{u}(k_x)=u'_{x'}$ with $u'_{x'}$ as given in Prop.~\ref{prop:bounds_f_l_h} and $\hat{u}(k)=u^\ast_{N-k_x,\chi''}(k-k_x-1)$ for all $k\in\mathbb{I}_{[k_x+1,N]}$, where we abbreviate $x':=x_{u^\ast}(k_x,x)$, $x'':=f(x',u'_{x'})$ and $H'$ as well as $H''$ analogously, which yields $\chi':=(x',H')$, $\chi'':=(x'',H'')$. This candidate sequence is feasible for prediction horizons $N\geq N_2+\frac{T \hat{C}'}{\rho(\epsilon')}+T^2=:N_3$ with $\epsilon':= \min \{ \bar{\epsilon},\frac{1}{2}\gamma' ,\frac{\gamma'}{2\sqrt{p}L_h}\}$ where $\gamma':= \min \{ \gamma_f^{-1} (\delta_c), \gamma_y^{-1} (\frac{E_h}{\sqrt{p}}) \}$ and it holds $x',\,x''\in\mathcal{B}_{\delta_c}(x_s)$ as well as $\HnormB{H'-H^s},\, \HnormB{H''-H^s}\leq E_h$. Furthermore, we define $J'_N(x,\hat{u}):=\sum\limits_{\substack{k=0\\k\neq k_x}}^N \ell(x_{\hat{u}}(k,x),\hat{u}(k))$. Now, we apply Thm.~\ref{thm:loc_cont} for the time instants $k_x$ and $k_x+1$ such that it holds $\sum_{k=k_x+1}^N \ell(x_{\hat{u}}(k,x),\hat{u}(k))=J^\ast_{K_1}(\chi'')\leq J^\ast_{K_1}(\chi')+\delta_1 (N_d)$ with $\delta_1\in\mathcal{L}_{\mathbb{N}}$ where $K_1:=N-k_x\geq N_2$ and $N_d:=N-N_2-T^2$. In particular, we have
\begin{align*}
\delta_1:=\gamma_v (\gamma_f ((1+\sqrt{p}L_h)\sigma_T)+\sigma''_H)+\gamma_v((1+\sqrt{p}L_h)\sigma_T),
\end{align*}
 with $\sigma''_H :=\max\{ \sqrt{p}L_h \sigma_T , \sqrt{p}\gamma_y ((1+\sqrt{p}L_h)\sigma_T) \}$. We get
	\begin{equation}\label{eq:ineq_proof_value_convergence}
	\begin{split}
	J'_N(x,\hat{u})\leq &J^\ast_N(\chi)+\delta_1(N_d), \\
	\ell(x',u'_{x'})\leq &\ell_s+\delta_2(N_d),
	\end{split}
	\end{equation} where the second inequality follows from Prop.~\ref{prop:bounds_f_l_h} with $\delta_2:=\gamma_l ((1+\sqrt{p}L_h)\sigma_T )$.
	Considering the closed loop, we get from the dynamic programming principle~\citep{Bertsekas95}
	\begin{equation*}
		\begin{split}
		J_K^{\mathrm{cl}}(\chi)=&J^\ast_N (\chi)-J^\ast_{N-1}(\chi_{\mu_N}(K,\chi))\\&+\sum_{k=1}^{K-1} J^\ast_N(\chi_{\mu_N}(k,\chi))-J^\ast_{N-1}(\chi_{\mu_N}(k,\chi)).
		\end{split}
	\end{equation*}  
	By combining this with~\eqref{eq:ineq_proof_value_convergence} and the definition of $J'_N(x,\hat{u})$ we obtain
	\begin{align*}
	J^\ast_N(\chi_{\mu_N}(k,\chi))-J^\ast_{N-1}(\chi_{\mu_N}(k,\chi))\leq \ell_s+\sum_{i=1}^2 \delta_i (N-N_{6,1}),
	\end{align*}
	  which yields the assertions with $\sigma_3:=\delta_1+\delta_2$.\\
	\textit{Part II: Showing~\eqref{eq:value_conv_ass_rotated}.} 
	In order to apply~\eqref{eq:residuum_1} to the original and rotated problem at initial values $\chi$ and $\chi_{\mu_N}(K,\chi)$ we need at least one time instant
	\begin{equation*}
	\begin{split}
	k_x\in\mathcal{T}^{'\epsilon}_{[T-1,N-N_2]}((u^\ast_{N,\chi},x),(u^\ast_{N,\chi_{\mu_N}(K,\chi)},x_{\mu_N}(K,\chi)),\\(\tilde{u}^\ast_{N,\chi},x),(\tilde{u}^\ast_{N,\chi_{\mu_N}(K,\chi)},x_{\mu_N}(K,\chi)))
	\end{split}
	\end{equation*} in a neighborhood $\epsilon\leq \bar{\epsilon}$. We choose $N_6:=\max \{ N_{6,2}+\frac{4T\hat{C}'}{\rho(\bar{\epsilon})},~N_3+1\}$, which implies $\epsilon=\rho^{-1} (\frac{4T\hat{C}'}{N-N_{6,2}})=:\sigma_{6,1}(N-N_{6,2})$ with $\sigma_{6,1}\in\mathcal{L}_{\mathbb{N}}$. Now, we can apply Lemma~\ref{lem:residuum} and use the results of Part I of this proof which yields
	\begin{align}
	J^\ast_N(\chi)&-J^\ast_N(\chi_{\mu_N}(K,\chi))\leq\tilde{ J}^\ast_N(\chi)-\tilde{ J}^\ast_N(\chi_{\mu_N}(K,\chi))\nonumber\\&-\lambda(x)+\lambda(x_{\mu_N}(K,\chi))+R(\epsilon)+\psi(x,H)\label{eq:proof_value_rotated_cost}
	\end{align}
	with $ R(\epsilon)= 8 \gamma_v((1+\sqrt{p}L_h)\epsilon)+2\alpha_\lambda (\epsilon)+(T-1)L_h \norm{\bl}\epsilon$, $ R\in\mathcal{K}_\infty$. With $\epsilon=\sigma_{6,1}(N-N_{6,2})$, we obtain $\sigma_6(N-N_{6,2}):=R(\sigma_{6,1}(N-N_{6,2}))$ with $\sigma_6\in\mathcal{L}_{\mathbb{N}}$. Now, using the definition of the rotated cost, \eqref{eq:value_conv_ass} and~\eqref{eq:proof_value_rotated_cost} yields~\eqref{eq:value_conv_ass_rotated}. Additional details can be found in~\cite[Thm.~5 \& 6]{thesis_mario}.
\end{proof}
\begin{remark}
Using the value convergence results from Theorem~\ref{thm:performance} for the rotated cost~\eqref{eq:value_conv_ass_rotated}, one can directly establish practical convergence of the closed loop using $\tilde{\ell}$ positive definite, compare~\cite[Thm.~7 \& 8]{thesis_mario}.
\end{remark}
\section{Stability results}\label{sec:Stability}
In this section we conclude practical asymptotic stability (p.\,a.\,s.) of the proposed EMPC scheme. Additional to the optimal rotated value function, which is sufficient to show p.\,a.\,s. for EMPC without transient average constraints~\citep{Gruene13}, we use input-to-state stability (ISS) of the state $H$ which results in a practical non-monotonic Lyapunov function. Finally, the approach from~\cite{Ahmadi2008} is used in order to obtain a practical Lyapunov function.
\subsubsection*{Input-to-State-Stability of the Storage}\ \\
From one time step to another, the storage $H$ is shifting its columns one to the left and the updated last column is equal to the auxiliary output $h(x,u)$ at the last time instant. Therefore, we obtain a discrete-time system of the form $H(k+1)=f_H(H(k),x(k),u(k))$. 
\begin{lemma}\label{lem:ISS}
	For any $\kappa\in(0,\infty)$, the function $\hat{V}_\kappa(H):=\sum_{i=1}^{T-1} i \norm{H_i-h_s}_1^\kappa$ satisfies 
\begin{align*}
\norm{H-H^s}_1^\kappa \leq& \hat{V}_\kappa(H)\leq (T-1)^2 \norm{H-H^s}_1^\kappa,\\
&\hat{V}_\kappa(f_H(H,x,u))-\hat{V}_\kappa(H)\\
\leq& -\norm{H-H^s}_1^\kappa+(T-1)\norm{h(x,u)-h_s}_1^\kappa.
\end{align*}
 Furthermore, $H$ is ISS w.\,r.\,t. $h(x,u)-h_s$.
\end{lemma}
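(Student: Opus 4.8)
The plan is to verify the three claimed inequalities directly from the definition $\hat{V}_\kappa(H):=\sum_{i=1}^{T-1} i\,\norm{H_i-h_s}_1^\kappa$, and then invoke a standard ISS-Lyapunov characterization. For the lower and upper sandwich bounds, I would first note that $\norm{H-H^s}_1 = \sum_{i=1}^{T-1}\norm{H_i-h_s}_1$, since $H^s$ has every column equal to $h_s$ and the matrix $1$-norm here is (by the paper's convention) the sum of column $1$-norms. The lower bound $\norm{H-H^s}_1^\kappa \le \hat{V}_\kappa(H)$ follows because the coefficient $i\ge 1$ for each term, combined with $(\sum_i a_i)^\kappa \le$ or $\ge \sum_i a_i^\kappa$ — here I must be slightly careful: for $\kappa\ge 1$ one has $(\sum a_i)^\kappa \ge \sum a_i^\kappa$, while for $\kappa\in(0,1)$ one has $(\sum a_i)^\kappa \le \sum a_i^\kappa$. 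Since the lemma claims the bound for all $\kappa\in(0,\infty)$, the clean way is: $\hat V_\kappa(H)\ge \sum_{i=1}^{T-1}\norm{H_i-h_s}_1^\kappa$, and separately $\norm{H-H^s}_1^\kappa = (\sum_i \norm{H_i-h_s}_1)^\kappa$. For $\kappa\in(0,1]$ the subadditivity $(\sum a_i)^\kappa\le\sum a_i^\kappa \le \hat V_\kappa$ closes it; for $\kappa\ge1$ one instead uses that each single term $\norm{H_i-h_s}_1^\kappa \le \norm{H-H^s}_1^\kappa$ (monotonicity) and... this actually goes the wrong direction, so I expect the intended reading is that $\norm{\cdot}_1^\kappa$ is shorthand for $(\norm{\cdot}_1)^\kappa$ and the bound is meant with the convention that makes $x\mapsto x^\kappa$ interact correctly — most likely the paper implicitly takes $\kappa$ so that things work, or uses equivalence of the quantities up to the stated constants. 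I would present the argument for both regimes of $\kappa$ cleanly, using $\min$/$\max$ over the coefficient set $\{1,\dots,T-1\}$ and the elementary inequality $\frac{1}{n^{\kappa-1}}(\sum a_i)^\kappa \le \sum a_i^\kappa \le (\sum a_i)^\kappa$ valid for $\kappa\ge1$ (and reversed for $\kappa\le1$) with $n=T-1$ — but since the claimed constants are $1$ and $(T-1)^2$ and not $n^{\kappa-1}$-dependent, I suspect the paper is content with the weaker statement and I would just supply whichever chain of elementary $\ell^p$-type inequalities yields exactly $1$ and $(T-1)^2$; the upper bound $\hat V_\kappa(H)\le (T-1)\sum_i\norm{H_i-h_s}_1^\kappa \le (T-1)\cdot(T-1)\norm{H-H^s}_1^\kappa$ uses coefficient $i\le T-1$ and then $\sum a_i^\kappa \le (T-1)(\sum a_i)^\kappa$ for... again the $\kappa\le1$ case needs the reverse. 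Let me just say I would split on $\kappa \lessgtr 1$ and in each case pick the correct direction of the power-sum inequality.

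Second, for the decrease condition, the key observation is the telescoping/shift structure of $f_H$: applying the closed-loop update, the columns of $H(k+1)$ are $(H_2,\dots,H_{T-1},h(x,u))$, so
\[
\hat V_\kappa(f_H(H,x,u)) = \sum_{i=1}^{T-2} i\,\norm{H_{i+1}-h_s}_1^\kappa + (T-1)\norm{h(x,u)-h_s}_1^\kappa.
\]
Re-indexing the first sum with $j=i+1$ gives $\sum_{j=2}^{T-1}(j-1)\norm{H_j-h_s}_1^\kappa$. Subtracting $\hat V_\kappa(H)=\sum_{j=1}^{T-1}j\,\norm{H_j-h_s}_1^\kappa$ makes every coefficient drop by exactly $1$ (with the $j=1$ term simply disappearing), so
\[
\hat V_\kappa(f_H(H,x,u)) - \hat V_\kappa(H) = -\sum_{j=1}^{T-1}\norm{H_j-h_s}_1^\kappa + (T-1)\norm{h(x,u)-h_s}_1^\kappa,
\]
and bounding $\sum_{j=1}^{T-1}\norm{H_j-h_s}_1^\kappa \ge$ something comparable to $\norm{H-H^s}_1^\kappa$ (again via the power-sum inequality, the same step as in the lower sandwich bound, and indeed for $\kappa\in(0,1]$ it is $\ge \norm{H-H^s}_1^\kappa$ directly by subadditivity, matching the claimed RHS exactly) yields the stated inequality. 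This telescoping is essentially exact, so I expect no obstacle here at all.

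Finally, the ISS conclusion: with the sandwich bounds giving $\hat V_\kappa$ bounded above and below by $\mathcal K_\infty$ functions of $\norm{H-H^s}_1$ (namely $r\mapsto r^\kappa$ and $r\mapsto (T-1)^2 r^\kappa$, both in $\mathcal K_\infty$ on $[0,\infty)$), and the decrease $\hat V_\kappa(f_H)-\hat V_\kappa(H)\le -\alpha_3(\norm{H-H^s}_1) + \sigma(\norm{h(x,u)-h_s}_1)$ with $\alpha_3(r)=r^\kappa\in\mathcal K_\infty$ and $\sigma(r)=(T-1)r^\kappa\in\mathcal K_\infty$, $\hat V_\kappa$ is by definition an ISS-Lyapunov function for the system $H(k+1)=f_H(H(k),x(k),u(k))$ with input $h(x,u)-h_s$ (equilibrium $H^s$). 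Invoking the standard discrete-time ISS-Lyapunov theorem (e.g.\ the converse-free direction, cf.~the references already cited in the paper) then immediately gives that $H$ is ISS w.\,r.\,t.\ $h(x,u)-h_s$, completing the proof. The only genuinely delicate point, as flagged, is getting the power-sum inequalities pointed in the right direction for all $\kappa\in(0,\infty)$ so that the specific constants $1$ and $(T-1)^2$ come out; I would handle that by treating $\kappa\le 1$ (subadditivity of $t\mapsto t^\kappa$) and $\kappa\ge 1$ (superadditivity, together with the coarser bound $\norm{H_i-h_s}_1\le\norm{H-H^s}_1$) as two short separate cases.
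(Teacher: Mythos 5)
Your telescoping computation of $\hat V_\kappa(f_H(H,x,u))-\hat V_\kappa(H)$ is exactly the paper's argument and is correct, as is the final appeal to the standard discrete-time ISS-Lyapunov characterization. The gap is in the norm convention, and it is not cosmetic: you read $\norm{H-H^s}_1$ as the sum of the column $1$-norms, $\sum_{i=1}^{T-1}\norm{H_i-h_s}_1$, and under that reading the claimed sandwich bounds are genuinely \emph{false} for $\kappa>1$. For instance, with $T-1=2$, $\norm{H_1-h_s}_1=\norm{H_2-h_s}_1=1$ and $\kappa=3$, the lower bound would require $(1+1)^3=8\le 1\cdot 1+2\cdot 1=3$. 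This is exactly the direction you flagged as ``going the wrong way'' and proposed to repair by splitting on $\kappa\lessgtr 1$; but no case split can rescue a false inequality, so the $\kappa\ge 1$ branch of your plan cannot be completed, and the same failure hits the last step of your decrease bound ($\sum_j a_j^\kappa\ge(\sum_j a_j)^\kappa$ is also false for $\kappa>1$).

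The paper resolves this in the first line of its proof by using the induced matrix $1$-norm, i.e.\ $\norm{H-H^s}_1=\max_{j\in\mathbb{I}_{[1,T-1]}}\norm{H_j-h_s}_1$ (maximum column sum), so that $\norm{H-H^s}_1^\kappa=\max_j\norm{H_j-h_s}_1^\kappa$. With this convention all three inequalities are immediate for every $\kappa\in(0,\infty)$ and no power--sum inequality is needed: the maximum is dominated by the weighted sum with coefficients $i\ge 1$ (lower bound); $\hat V_\kappa(H)\le (T-1)\sum_i\norm{H_i-h_s}_1^\kappa\le (T-1)^2\max_j\norm{H_j-h_s}_1^\kappa$ (upper bound); and $\sum_j\norm{H_j-h_s}_1^\kappa\ge\max_j\norm{H_j-h_s}_1^\kappa$ closes the decrease condition after your (correct) telescoping identity. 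That this is the intended convention is corroborated elsewhere in the paper, e.g.\ the chain $\HnormB{H-H^s}\le\norm{H-H^s}_1\le\sqrt{p}L_h\epsilon$ after Theorem~\ref{thm:turnpike_T_consec}, which would carry an extra factor of $T-1$ under the sum convention. Once the bounds are in place, your ISS conclusion coincides with the paper's reference to the ISS-Lyapunov lemma of Jiang and Wang.
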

\begin{proof}
	It holds 
\begin{align*}
\norm{H-H^s}_1^\kappa= \max_{j\in\mathbb{I}_{[1,T-1]}}\{\norm{H_j-h_s}_1^\kappa\}\leq \hat{V}_\kappa(H),
\end{align*}
 as well as 
 \begin{align*}
\hat{V}_\kappa (H)\leq (T-1) \sum_{i=1}^{T-1} \norm{H_i-h_s}_1^\kappa\leq (T-1)^2 \norm{H-H^s}_1^\kappa.
\end{align*}
 Furthermore, we can bound 
 \begin{align*}
&\hat{V}_\kappa(f_H(H,x,u))-\hat{V}_\kappa (H)\\
=&-\sum_{j=1}^{T-1} \norm{H_j-h_s}_1^\kappa+(T-1)\norm{h(x,u)-h_s}_1^\kappa \\
\leq& -\norm{H-H^s}^\kappa_1 + (T-1) \norm{h(x,u)-h_s}_1^\kappa
\end{align*}
  and ISS follows from~\cite[Lem. 3.5]{Jiang2001}.
\end{proof} 
\begin{remark}\label{rem:bound_future_h}
	Since the transient average constraints also need to be satisfied in the overlapping periods, i.\,e., also w.\,r.\,t. the past values of $h(x,u)$, we can upper bound 
	\begin{equation*}
		\bl^\top \sum_{j=1}^{T-1} \sum_{k=0}^{j-1} h_{\mu_N} (k,x,H) \leq (T-1)^2 \norm{\bl}\cdot \norm{H-H^s}_1. 
	\end{equation*}
\end{remark}

\subsubsection*{Practical Asymptotic Stability}\ \\
We consider p.\,a.\,s. of the extended state $(x,H)$ as defined in~\cite{Gruene14}.
\begin{defi}
	The steady-state $(x_s,H^s)$ is called \textit{practically asymptotically stable} w.\,r.\,t. $\epsilon\geq 0$ on a set $\mathcal{S}\subseteq \mathbb{X}\times \mathbb{H}$ with $(x_s,H^s)\in\mathcal{S}$ if there exitsts $\beta\in\mathcal{KL}$ such that $\norm{x_{\mu_N}(k,x,H)-x_s}+\norm{H^{\mathrm{cl}}(k,x,H)-H^s}\leq \max\{ \beta(\norm{x-x_s}+\norm{H-H^s},k) , \ \epsilon\} $ holds for all $(x,H)\in\mathcal{S}$ and all $k\in\mathbb{N}_0$.
\end{defi}
In order to construct a practical Lyapunov function, we require a polynomial lower bound on $\rho\in\mathcal{K}_\infty$ (from Ass.~\ref{ass:dissip}). Furthermore, we assume that the function $\psi\in\mathcal{K}_\infty$ (from Ass.~\ref{ass:fct_psi}) satisfies a suitable upper bound.
\begin{ass}\label{ass:polynomial}
	There exist constants $a,\omega>0 $ such that $\rho(r)\geq a\cdot r^\omega$ holds for all $r\in[0,r_{\mathrm{max}}]$ with $r_{\mathrm{max}}:=\max_{(x,u)\in\mathbb{Z}}\norm{(x-x_s,u-u_s)}$. Furthermore, we have
\begin{align*}
\psi(x,H)< \frac{1}{2}a(n+m)^{-\frac{\omega}{2}}\left( \norm{x-x_s}_1^\omega +\frac{\norm{H-H^s}_1^\omega}{L_h(T-1)}\right)
\end{align*}
	for $T\geq 2$ and all $(x,H)\in\mathbb{X}\times\mathbb{H}$ with $(x,H)\neq (x_s,H^s)$.
\end{ass}
We point out that the previous assumption is only needed in order to show practical asymptotic stability; performance and convergence guarantees (Sec.~\ref{sec:peformance}) have been shown without this assumption. Finally, the following theorem provides a practical Lyapunov function. 
\begin{theo}\label{thm:p.a.s}
	Let Ass.~\ref{ass:comp_cont}-\ref{ass:polynomial} hold and $T\geq 2$. There exist $c>0$, functions $\alpha_1,\alpha_2,\alpha_3\in\mathcal{K}_\infty$ and $\delta_1,\delta_2\in\mathcal{L}_{\mathbb{N}}$ s.\,t. \begin{equation*}
	\begin{split}
		W(x,H):= &\sum_{j=0}^{T-1} \hat{W}(x_{\mu_N}(j,x,H),H^{\mathrm{cl}}(j,x,H)),\\
		\hat{W}(x,H):= &\tilde{ J}^\ast_N(x,H)+c\sum_{i=1}^{T-1} i \cdot \norm{H_i-h_s}_1^\omega
	\end{split}
	\end{equation*} satisfies with $\xi:=\norm{x-x_s}+\norm{H-H^s}_1$
	\begin{align}
		&\alpha_1(\xi)\leq W(x,H)\leq \alpha_2(\xi)+\delta_1(N-N_6),\label{eq:bounds_W}\\
		\begin{split}
		&W(x_{\mu_N}(1,x,H),H^{\mathrm{cl}}(1,x,H))\\\leq &W(x,H)-\alpha_3(\xi)+\delta_2(N-N_6),\label{eq:decrease_W}
		\end{split}
	\end{align}
	for all $(x,H)\in\mathbb{X}\times\mathbb{H}$ and all $N\geq N_6+1$. Moreover, the steady-state $(x_s,H^s)$ is p.\,a.\,s. for all $(x,H)\in\mathbb{X}\times\mathbb{H}$ w.\,r.\,t. $\epsilon\to0$ as $N\to\infty$.
\end{theo}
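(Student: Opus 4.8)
The plan is to establish the two Lyapunov-type estimates \eqref{eq:bounds_W}--\eqref{eq:decrease_W} and then invoke a standard converse theorem for (non-monotone) practical Lyapunov functions in the spirit of \cite{Gruene14,Ahmadi2008} to conclude p.\,a.\,s., the property $\delta_1,\delta_2\in\mathcal{L}_{\mathbb{N}}$ yielding $\epsilon\to0$ as $N\to\infty$. The decisive structural fact, already encoded in the definition of $W$, is the telescoping identity $W(\chi_{\mu_N}(1,\chi))-W(\chi)=\hat{W}(\chi_{\mu_N}(T,\chi))-\hat{W}(\chi)$, so that \eqref{eq:decrease_W} reduces to a decrease of $\hat{W}$ over $T$ consecutive closed-loop steps. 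The rotated value function alone cannot do this, since \eqref{eq:value_conv_ass_rotated} carries the extra terms $\psi(x,H)$ and $\sum\bl^\top h$; this is the reason for adding the weighted storage function $c\,\hat{V}_\omega(H)=c\sum_{i=1}^{T-1}i\|H_i-h_s\|_1^\omega$ from Lemma~\ref{lem:ISS}, whose role is to supply the missing decrease in the $H$-direction and to keep its own "disturbance" $\|h-h_s\|_1^\omega$ payable out of the stage cost.

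For the lower bound in \eqref{eq:bounds_W} I keep only the $j=0$ summand of $W$, which is legitimate since every summand is nonnegative ($\tilde{\ell}\geq0$ by \eqref{eq:propositions_bounds_tilde_l} and $\hat{V}_\omega\geq0$): by the dissipation inequality of Ass.~\ref{ass:dissip} one has $\tilde{J}^\ast_N(x,H)\geq\rho(\|x-x_s\|)$, and together with $c\hat{V}_\omega(H)\geq c\|H-H^s\|_1^\omega$ the two nonnegative terms combine into a single $\alpha_1\in\mathcal{K}_\infty$ of $\xi$. For the upper bound, on a small neighborhood of $(x_s,H^s)$ I iterate Lemma~\ref{lem:turnpike_P(N)}: the applied control step lies within $\eta(N,\cdot)$ of $(x_s,u_s)$, so by continuity of $f$ and Lipschitz continuity of $h$ the successor state is again near $(x_s,H^s)$ and Lemma~\ref{lem:turnpike_P(N)} may be re-applied, for $j\in\mathbb{I}_{[0,T-1]}$; bounding $\tilde{J}^\ast_N(\chi_{\mu_N}(j,\chi))$ via Ass.~\ref{ass:asy_contr} and $c\hat{V}_\omega(H^{\mathrm{cl}}(j,\chi))\leq c(T-1)^2\|H^{\mathrm{cl}}(j,\chi)-H^s\|_1^\omega$ yields a $\mathcal{K}_\infty$-bound in $\xi$ plus an $\mathcal{L}_{\mathbb{N}}$-term $\delta_1(N-N_6)$ from the finite-horizon offset $\eta(N,0)$. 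Off that neighborhood, $W$ is bounded by a constant by compactness of $\mathbb{X}\times\mathbb{H}$, and $\alpha_2$ is enlarged so that $\alpha_2(\xi)$ dominates this constant there; that $(x_s,H^s)$ is a fixed point of the closed loop ($u\equiv u_s$ is feasible from $(x_s,H^s)$ and makes $\tilde{J}_N=0$, hence $\mu_N(x_s,H^s)=u_s$) forces $\alpha_2(0)=0$.

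The core is the $T$-step decrease. Applying \eqref{eq:value_conv_ass_rotated} at $\chi$ with $K=T$ and fixed horizon $N$ gives
\[
\tilde{J}^\ast_N(\chi_{\mu_N}(T,\chi))-\tilde{J}^\ast_N(\chi)\leq -\tilde{J}^{\mathrm{cl}}_T(\chi)+\sigma_6(N-N_{6,2})+T\sigma_3(N-N_{6,1})+\psi(x,H)+\sum_{k=0}^{T-1}\bl^\top h_{\mu_N}(k,\chi),
\]
and here $\sum_{k=0}^{T-1}\bl^\top h_{\mu_N}(k,\chi)\leq0$: after $T-1$ closed-loop steps the storage equals $H^{\mathrm{cl}}(T-1,\chi)=[h_{\mu_N}(0,\chi),\dots,h_{\mu_N}(T-2,\chi)]$, so the partial-window constraint with index $j=1$ active at $\chi_{\mu_N}(T-1,\chi)$ reads precisely $\sum_{k=0}^{T-1}h_{\mu_N}(k,\chi)\leq0$, and $\bl\geq0$. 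Adding $c[\hat{V}_\omega(H^{\mathrm{cl}}(T,\chi))-\hat{V}_\omega(H)]$, which summing the per-step inequality of Lemma~\ref{lem:ISS} over $k\in\mathbb{I}_{[0,T-1]}$ bounds above, leaves
\[
\hat{W}(\chi_{\mu_N}(T,\chi))-\hat{W}(\chi)\leq -\tilde{J}^{\mathrm{cl}}_T(\chi)+c(T-1)\sum_{k=0}^{T-1}\|h_{\mu_N}(k,\chi)-h_s\|_1^\omega-c\sum_{k=0}^{T-1}\|H^{\mathrm{cl}}(k,\chi)-H^s\|_1^\omega+\psi(x,H)+\sigma_6+T\sigma_3.
\]
Using Lipschitz continuity of $h$ and the polynomial bound $\rho(r)\geq a r^\omega$ of Ass.~\ref{ass:polynomial}, each $(T-1)\|h_{\mu_N}(k,\chi)-h_s\|_1^\omega$ is bounded by a multiple of $\tilde{\ell}(x_{\mu_N}(k,\chi),\mu_N(k,\chi))$ (since $\tilde{\ell}\geq\rho\geq a\|\cdot\|^\omega$), so for a sufficiently small $c$ this correction is absorbed by a fraction of $\tilde{J}^{\mathrm{cl}}_T(\chi)=\sum_k\tilde{\ell}(\cdot)$; the $k=0$ terms of the two remaining negative sums then produce a strictly negative contribution $\sim\|x-x_s\|_1^\omega$ (the factor $(n+m)^{-\omega/2}$ coming from norm equivalence) and $\sim\|H-H^s\|_1^\omega$, and the tailored upper bound on $\psi$ in Ass.~\ref{ass:polynomial} is exactly what makes these strictly dominate $\psi(x,H)$; by compactness of $\mathbb{X}\times\mathbb{H}$ and continuity this leaves a genuine $-\alpha_3(\xi)$ with $\alpha_3\in\mathcal{K}_\infty$, while $\sigma_6+T\sigma_3$ and the $\mathcal{L}_{\mathbb{N}}$-terms from the preceding step assemble into $\delta_2(N-N_6)$.

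The main obstacle is precisely this orchestration: a single weight $c$ must at once make $\|h-h_s\|_1^\omega$ payable out of $\tilde{\ell}$, keep the negative $-c\|H-H^s\|_1^\omega$ large enough, and let the surviving $x$- and $H$-terms outweigh $\psi$; verifying that these constants fit together — together with the feasibility identity $\sum_{k=0}^{T-1}\bl^\top h_{\mu_N}(k,\chi)\leq0$ — is the crux, and the precise form of Ass.~\ref{ass:polynomial} is designed to enable it. By the telescoping identity this establishes \eqref{eq:decrease_W}, after which p.\,a.\,s.\ with $\epsilon\to0$ as $N\to\infty$ follows from the standard practical-Lyapunov estimate; the two bounds on $W$ and that concluding step are routine adaptations of the corresponding arguments in \cite{Gruene13,Gruene14}.
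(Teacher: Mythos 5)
Your proposal follows essentially the same route as the paper: lower-bound $W$ by its $j=0$ summand using strict dissipativity and $\hat V_\omega(H)\ge\norm{H-H^s}_1^\omega$; obtain the decrease from the telescoping identity $W(\chi_{\mu_N}(1,\chi))-W(\chi)=\hat W(\chi_{\mu_N}(T,\chi))-\hat W(\chi)$ combined with Theorem~\ref{thm:performance}, the ISS inequality of Lemma~\ref{lem:ISS}, cancellation of $\sum_{k=0}^{T-1}\bl^\top h_{\mu_N}(k,\chi)\le 0$ via the transient average constraint, absorption of the $\norm{h-h_s}_1^\omega$ terms into $\tilde J^{\mathrm{cl}}_T$ through $\rho(r)\ge ar^\omega$, and domination of $\psi$ via Ass.~\ref{ass:polynomial}; then conclude p.\,a.\,s.\ from the practical Lyapunov estimates. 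The only substantive differences are bookkeeping: you apply \eqref{eq:value_conv_ass_rotated} once with $K=T$, whereas the paper first derives a one-step non-monotone decrease of $\hat W$ (its inequality \eqref{eq:decrease_hat_W}, carrying the residual $\bl^\top h$) and sums it over $T$ steps; and for the upper bound you iterate Lemma~\ref{lem:turnpike_P(N)} along the closed loop plus a compactness argument off the neighborhood, where the paper reuses the one-step decrease together with Remark~\ref{rem:bound_future_h}. Both variants are sound, and you correctly identify that the fitting of the single constant $c$ against the specific constant in Ass.~\ref{ass:polynomial} is the crux (the paper resolves it at the same level of detail you sketch). One side remark is false, though inessential: $u\equiv u_s$ being feasible with $\tilde J_N=0$ does \emph{not} imply $\mu_N(x_s,H^s)=u_s$, because the MPC minimizes $J_N$ rather than $\tilde J_N$, and without terminal constraints the finite-horizon optimizer from $(x_s,H^s)$ generally leaves the steady state --- this is precisely why only \emph{practical} stability holds; fortunately $\alpha_2(0)=0$ is automatic for a $\mathcal{K}_\infty$ function and a possibly positive $W(x_s,H^s)$ is covered by the offset $\delta_1(N-N_6)$, so nothing in your argument breaks.
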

\begin{proof}
	We split this proof in three different parts. In the first part we investigate the rotated value function. Then, we combine $\tilde{J}^\ast_N(x,H)$ with the ISS property of $H$ in order to obtain a non-monotonic practical Lyapunov function $\hat{W}(x,H)$. In the third, part we construct $W(x,H)$ by using $\hat{W}(x,H)$ and show that it satisfies the bounds~	\eqref{eq:bounds_W}--\eqref{eq:decrease_W}, which implies p.\,a.\,s.\\
	\textit{Part I: Optimal Rotated Value Function.} Strict dissipativity implies $\tilde{ J}^\ast_N(x,H)\geq \rho(\norm{(x-x_s,u-u_s)})$. Using a case distinction of $(x,H)$, we can construct a candidate sequence in order to get an upper bound. 
We use either asymptotic controllability (Ass.~\ref{ass:asy_contr}) or local controllability (Ass.~\ref{ass:loc_contr})  to obtain a feasible candidate $u_1$ that drives the system to the optimal steady-state, which by optimality implies $\tilde{J}^\ast_N(x,H)\leq \tilde{ J}_N(x,u_1)\leq \max\{ \frac{\delta}{\min \{\delta_c,E_h\} }\xi, \ d\alpha_u (\gamma_x(\xi)+\gamma_u(\xi)) \}=:\alpha_7(\xi)$ with $\alpha_7\in\mathcal{K}_\infty$. 
Combining the two bounds, we obtain
	\begin{equation}\label{eq:bounds_J_tilde}
		\rho(\norm{(x-x_s,u-u_s)})\leq \tilde{ J}^\ast_N(x,H)\leq \alpha_7(\xi).
	\end{equation} By using dissipativity and Thm.~\ref{thm:performance} with $K=1$, we obtain that it holds with $N\geq N_6+1$ and
 	for all $(x,H)\in\mathbb{X}\times\mathbb{H}$	\begin{align}\nonumber
	\tilde{ J}^\ast_N(\chi_{\mu_N}(1,\chi))\leq \tilde{ J}^\ast_N(\chi)+\delta_7(N-N_6)+\psi(x,H)\\-\rho(\norm{(x-x_s,\mu_N(x,H)-u_s)})+\bl^\top h(x,\mu_N(x,H)),\label{eq:decrease_J_tilde}
	\end{align} with $\delta_7:=\sigma_6+\sigma_3$. For more details we refer to~\cite[Lem.~7]{thesis_mario}.\\
	\textit{Part II: Combination of $\tilde{ J}^\ast_N(x,H)$ and ISS.} 
	We set
	\begin{equation*}
		\hat{W}(x,H):=\tilde{ J}^\ast_N(x,H)+c\hat{V}_\omega (H),
	\end{equation*}
	 with $c:=\frac{a(n+m)^{-0.5\omega}}{2L_h(T-1)}>0$. By using comparison function properties~\citep{Kellett2014}, we obtain that there exist $\hat{\alpha}_1(\xi)\leq \rho(\norm{x-x_s})+c\norm{H-H^s}_1^\omega$ and $\hat{\alpha}_2(\xi)\geq \alpha_7(\xi)+c(T-1)^2 \norm{H-H^s}_1^\omega$, which yields from~\eqref{eq:bounds_J_tilde} and Lem.~\ref{lem:ISS}
	\begin{equation}\label{eq:bounds_hat_W}
		\hat{\alpha}_1 (\xi)\leq \hat{W}(x,H)\leq \hat{\alpha}_2(\xi).
	\end{equation} Now, it follows from~\eqref{eq:decrease_J_tilde}, ISS, Lipschitz continuity of $h(x,u)$ and the definition of $\hat{W}(x,H)$ that it holds 
\begin{align*}
&\hat{W}(\chi_{\mu_N}(1,\chi))-\hat{W}(\chi)+\rho(\norm{(x-x_s,\mu_N(x,H)-u_s)}) \\
\leq&\delta_7(N-N_6)+\bl^\top h_{\mu_N}(0,x,H)-c\norm{H-H^s}_1^\omega\\
&+c(T-1)L_h \norm{(x-x_s,\mu_N(x,H)-u_s)}_1^\omega+\psi(x,H)
\end{align*}
	 By using Ass.~\ref{ass:polynomial}, we obtain with our choice of $c$ that there exists $\hat{\alpha}_3\in\mathcal{K}_\infty$ such that it holds for all $N\geq N_6+1$
	\begin{equation}
	\begin{split}\label{eq:decrease_hat_W}
		&\hat{W}(\chi_{\mu_N}(1,\chi))-\hat{W}(\chi) \\\leq &-\hat{\alpha}_3 (\xi) + \delta_7(N-N_6)+ \bl^\top h(x,\mu_N(x,H)).
	\end{split}
	\end{equation}
	Note that the last term summed over $T$ steps is always negative, which implies that $\hat{W}(x,H)$ is a practical Lyapunov function over $T$ steps (non-monotonical).  \\
	\textit{Part III: Practical Lyapunov Function $W(x,H)$.} We construct a monotonically decreasing function based on~\citep{Ahmadi2008}. In particular, we use $W(\chi):=\sum_{j=0}^{T-1}$ $\hat{W}(\chi_{\mu_N}(j,\chi))$ which is equal to $W(x,H)$ given in the assertion. Using $\hat{W}(x,H)\geq 0$ from~\eqref{eq:bounds_hat_W} yields the lower bound of~\eqref{eq:bounds_W} with $\alpha_1:=\hat{\alpha}_1$.
	From~\eqref{eq:decrease_hat_W} and Rem.~\ref{rem:bound_future_h}, we obtain 
\begin{align*}
W(x,H)\leq& T\hat{\alpha}_2 (\xi)+(T-1)^2 \norm{\bl}\cdot \norm{H-H^s}_1\\
&+\frac{T(T-1)}{2}\delta_7 (N-N_6),
\end{align*}
 which shows~\eqref{eq:bounds_W} with $\alpha_2(\xi):=T\hat{\alpha}_2 (\xi)+(T-1)^2\norm{\bl} \xi$ and $\delta_1:=\frac{T}{2}(T-1)\delta_7$.
	 Furthermore, using the definition of the transient average constraints and~\eqref{eq:decrease_hat_W} yields $W(\chi_{\mu_N}(1,\chi))-W(\chi)\leq -\hat{\alpha}_3(\xi)+T\delta_7(N-N_6)$ which shows~\eqref{eq:decrease_W} with $\alpha_3:=\hat{\alpha}_3$ and $\delta_2:=T\delta_7$. Now, practical asymptotic stability directly follows from~\cite[Prop. 4.3]{Faulwasser18} with respect to $\epsilon(N-N_6)$, $\epsilon:=\alpha_1^{-1}(\alpha_2(\alpha_3^{-1}\delta_2+\delta_2)+\delta_1+\delta_2)\in\mathcal{L}_{\mathbb{N}}$ and hence, $\epsilon\to0$ as $N\to\infty$.
\end{proof}
As previously mentioned,~\eqref{eq:decrease_J_tilde} shows that the rotated value function is not a valid Lyapunov function for the EMPC setting subject to transient average constraints~\eqref{eq:trans_av_con}. However, the function $W(x,H)$ is a valid Lyapunov function for the extended state $(x,H)$. We conjecture that stability of transient average constrained EMPC with terminal conditions~\citep{Mueller14} can be shown using similar arguments.
\section{Numerical Example}\label{sec:Example}
In this section, we illustrate some of the provided theoretical results. We consider the example from~\cite{Mueller14b,Koehler17} which reads $x(k+1)=x(k)u(k)$ with state and input constraint set $\mathbb{Z}:=\mathbb{X}\times\mathbb{U}:=[-10,10]^2$ and transient average constraints of the form~\eqref{eq:trans_av_con} with $y=h(x,u)=2x+u-5$. The stage cost reads $\ell(x,u)=(x-3)^2+u^2$ which implies that the optimal steady-state is given by $(x_s,u_s)=(2,1)$. Thus, Ass.~\ref{ass:comp_cont} holds with the Lipschitz constant $L_h=3$. 
Strict dissipativity holds with $\bl=1$, the continuous storage function $\lambda(x)=1.5(x-2)$ and  $\rho(r)=0.25r^2\in\mathcal{K}_\infty$ which immediately satisfies the conditions in Ass.~\ref{ass:polynomial} with $a=0.25$ and $\omega=2$. 
The asymptotic controllability condition is difficult to show as stated, but we conjecture that the results can be modified such that asymptotic controllability on a control invariant sublevel set is sufficient using arguments from~\cite{Boccia14,JoKo18}. For a consideration of the local controllability property (Ass.~\ref{ass:loc_contr}) at the given example we refer to~\cite[Ch.~7]{thesis_mario}.

\subsubsection*{Turnpike Properties}\ \\
We consider an initial value $x=1$ where we also stay the past $T-1$ values there, i.\,e., $H=[h(1,1), \ \dots, \ h(1,1)]\in\mathbb{R}^{1\times(T-1)}$. Fixing $T=3$, we investigate two different prediction horizons $N_1=10$ and $N_2=12$. Simulations show that the amount of time instants in a steady-state nieghborhood is increasing for a larger $N$ as it is stated in Lem.~\ref{lem:turnpike}. Moreover, the neighborhood of the steady-state is shrinking for an increasing $N$ as it is shown in~\cite[Thm. 2]{thesis_mario}. Now, we fix $N=12$ and vary the time periods of the transient average constraints $T_1=3$ and $T_2=6$. The numerical result shows that the trajectory resulting from a larger time period $T$ is allowed to stay longer in a ``cheap'' region w.\,r.\,t. the stage cost $\ell$ which implies that the steady-state neighborhood is increasing for a larger $T$ which is in accordance with the results of Thm.~\ref{thm:turnpike_T_consec}. 

\subsubsection*{Closed Loop Results}\ \\
Now, we investigate the given EMPC scheme for $N=12$, $T=6$, $x=2$ and $H=[h(1,1), \ \dots, \ h(1,1), \ h(1,2)]\in\mathbb{R}^{1\times5}$. As shown in Theorem~\ref{thm:p.a.s}, the closed loop converges to a neighborhood of the optimal steady-state. Moreover, the rotated value function is not decreasing over the MPC iterations; but as proved in Thm.~\ref{thm:p.a.s}, the novel Lyapunov function $W(x,h)$ is (practically) monotonically decreasing. This result is illustrated in Figure~\ref{fig:ex_W_J_tilde_cl}.
\begin{figure}
	\definecolor{mycolor1}{rgb}{0.00000,0.44700,0.74100}%
	\definecolor{mycolor2}{rgb}{0.85000,0.32500,0.09800}%
	\centering
%
%
\tikzstyle{every node}=[font=\small]
\definecolor{mycolor1}{rgb}{0.00000,0.44700,0.74100}%
\definecolor{mycolor2}{rgb}{0.85000,0.32500,0.09800}%
\def\plotwidth{0.75\linewidth}
\def\xlim{10}
\begin{tikzpicture}[%
trim axis left, trim axis right
]
\pgfplotsset{set layers}

\begin{axis}[%
width=\plotwidth,
height=1.888in,
scale only axis,
xmin=0,
xmax=\xlim,
xlabel style={font=\color{white!15!black}},
xlabel={$k$},
xtick = {0,5,10,15,20},
ymin=0,
ymax=0.018,
axis y line*=right,
ylabel style = {align=center},
ylabel style={font=\color{white!15!black}},
ylabel={$\tilde{J}^\ast_N$},
axis background/.style={fill=white}
]
\addplot [color=mycolor2, forget plot, dashdotted]
  table[row sep=crcr]{%
0	1.0392909643997e-08\\
1	0.0166221538547582\\
2	0.00320679473529495\\
3	0.00306059980881912\\
4	0.00207951698865827\\
5	0.00135492016678995\\
6	0.000186437078632196\\
7	0.000166248740505637\\
8	0.000101689191115639\\
9	0.000108349072352354\\
10	7.84095541135343e-05\\
11	1.30485525779989e-05\\
12	2.63955872696897e-05\\
14	6.8645055222305e-06\\
15	1.42653455803554e-05\\
16	2.42603006839204e-06\\
19	4.65798219551061e-07\\
25	7.95871898162659e-07\\
29	3.90251660320473e-07\\
};\label{plot:ex_J_tilde}
\end{axis}
\begin{axis}[%
width=\plotwidth,
height=1.888in,
scale only axis,
xmin=0,
xmax=\xlim,
xlabel style={font=\color{white!15!black}},
xlabel={$k$},
ymin=0,
ymax=0.9,
axis y line*=left,
axis x line=none,
ylabel style = {align=center},
ylabel style={font=\color{white!15!black}},
ylabel={$W$},
axis background/.style={fill=white}
]
\addplot [color=mycolor1, forget plot]
table[row sep=crcr]{%
	0	0.866310666607585\\
	1	0.399947701463773\\
	2	0.1320694866622\\
	3	0.0278293150107878\\
	4	0.00698929105750778\\
	5	0.00378743728139241\\
	6	0.00164478637747578\\
	8	0.000620627588620692\\
	12	0.000113625443987075\\
	23	4.13786687047946e-06\\
};\label{plot:ex_W}
\end{axis}
\end{tikzpicture}
	\label{fig:ex_W_J_tilde_cl}
\end{figure}
\section{Conclusion}\label{sec:Conclusion}
In this work, we investigated transient average constrained EMPC without terminal constraints and showed performance guarantees as well as practical asymptotic stability. First, we introduced an additional state storing past values of the auxiliary output in order to consider the transient average constraints. We provided a turnpike phenomenon for consecutive time instants and by using a local controllability property, local continuity of the value function as well as convergence of the closed-loop cost (original cost and rotated cost) was shown. As the main contribution, we proved  practical asymptotic stability by a combination of the rotated value function, ISS of the auxiliary output storage and using results on non-monotonic Lyapunov functions from~\cite{Ahmadi2008}.

\bibliography{Literature}  

\end{document}